\documentclass[11pt,reqno]{amsart}
\usepackage{graphicx}
\usepackage{stmaryrd}
\usepackage{amssymb}
\usepackage{amsthm}
\usepackage{dsfont}
\usepackage{hyperref}
\usepackage{mathrsfs}
\usepackage{stmaryrd}
\usepackage[lite,initials,msc-links]{amsrefs}
\usepackage{amsmath}
\usepackage{centernot}
\usepackage{mathtools}
\usepackage{xcolor}

\newtheorem{theorem}{Theorem}
\newtheorem{lemma}[theorem]{Lemma}
\newtheorem{proposition}[theorem]{Proposition}
\newtheorem{corollary}[theorem]{Corollary}
\numberwithin{equation}{section}
\theoremstyle{definition}
\newtheorem{example}{Example}[section]

\usepackage{chngcntr}
\counterwithout{equation}{section}

\theoremstyle{definition}
\newtheorem*{definition}{Definition}
\theoremstyle{remark}
\newtheorem{remark}{Remark} 
\allowdisplaybreaks

\newcommand{\IP}{\mathbf{P}}

\newcommand{\con}[1]{ \xleftrightarrow{#1}}
\newcommand{\ncon}[1]{ \centernot{\xleftrightarrow{#1}}}

\newcommand{\n}{\mathbf{n}}

\newcommand{\pev}{{ \mathcal{P}}}

\newcommand{\nod}{\mathcal{N}}

\newcommand{\at}{}

\DeclareMathOperator{\pf}{\textnormal{pf}}

\newcommand{\red}{\mathcal R}
\newcommand{\blue}{\mathcal B}
\newcommand{\sink}{\mathcal S_-}
\newcommand{\origin}{\mathcal S_+}
\newcommand{\source}{\mathcal S}

\makeatletter
\newcommand\connect[2][]{%
  \ext@arrow 9999{\longleftrightarrowfill@}{#1}{#2}}
\newcommand\longleftrightarrowfill@{%
  \arrowfill@\leftarrow\relbar\rightarrow}
\makeatother

\title[On boundary correlations in planar Ashkin--Teller models]{On boundary correlations in planar Ashkin--Teller models}

\author{Marcin Lis}{
\address{Faculty of Mathematics\\ University of Vienna \\
Oskar-Morgenstern-Platz 1\\
1090 Wien}
\email{marcin.lis@univie.ac.at}}
 
\date{}

\begin{document}

\maketitle

\begin{abstract}
We generalize the switching lemma of Griffiths, Hurst and Sherman to the random current representation of the Ashkin--Teller model.
We then use it together with properties of two-dimensional topology to derive linear relations for multi-point boundary spin correlations and bulk order-disorder correlations in planar models.

We also show that the same linear relations are satisfied by products of Pfaffians.
As a result a clear picture arises in the noninteracting case of two independent Ising models
where multi-point correlation functions are given by Pfaffians and determinants of their respective two-point functions.
This gives a unified treatment of both the classical Pfaffian identities and recent total positivity inequalities for boundary spin correlations in the planar Ising model.

We also derive the Simon and Gaussian inequalities for general Ashkin--Teller models with negative four-body coupling constants.

\end{abstract}

\section{Introduction}
It has been well known since the work of Groeneveld, Boel and Kasteleyn~\cite{GBK} that in the Ising model the multi-point correlations of spins lying on the boundary of a planar graph
are given by Pfaffians of the respective two-point correlations. 
This can be seen as the Wick's rule for expectation values of products of noninteracting {Majorana} fermions, and is one of the many manifestations of the fermionic structure underlying the planar Ising model, going back to the work of Onsager and Kaufman~\cite{Onsager,Kaufman}, and Kadanoff and Ceva~\cite{KC}. 
Recently it was noticed by the author that certain matrices of such boundary two-point functions are totally positive~\cite{LisT}, i.e., determinants 
of all their minors are positive.
This was later extended by Galashin and Pylyavskyy to a deep and in a well defined sense bijective relation between planar Ising models and the positive orthogonal Grassmannian~\cite{GaPy}, which in turn was introduced in the study of scattering amplitudes of ABJM theory~\cite{HW}.
In this article we present a unified framework from which both the classical Pfaffian identities and total positivity inequalities of Ising boundary correlations are naturally concluded. Unlike the previous 
proofs of total positivity, our approach does not use mappings to other models like alternating flows~\cite{LisT} or the dimer model~\cite{GaPy}. Instead we use switching lemmas for random currents.
The idea of using switching identities to prove Pfaffian relations of Ising correlations (though to some extent implicitly present in the work of Boel and Kasteleyn~\cite{BK}) originated in the recent article of Aizenman, Duminil-Copin, Tassion and Warzel~\cite{ADTW} which was an inspiration for the present work. We note that related methods were also applied by Aizenman, Valc\'{a}zar and Warzel to the dimer model~\cite{AVW}.

More generally we establish linear identities and inequalities that are satisfied by boundary correlations of planar Ashkin--Teller models~\cite{AT}, i.e., two Ising-like spin configurations $\sigma$ and $\tilde \sigma$ coupled by 
a Hamiltonian with a four-body interaction.
To this end we first define a random current representation of the model, and establish a switching identity for the correlations of $\sigma$ and $\tilde \sigma$ which is a generalization of the classical switching lemma of Griffiths, Hurst, and Sherman~\cite{GHS}
for two independent Ising models (the case of vanishing four body interactions).  
We note that a similar idea, though expressed in a different language, appeared already in the work of Chayes and Shtengel~\cite{ChSh}.
Subsequently we show a new switching identity for the correlations of the $\{-1,0,1\}$-valued spins $\varphi=(\sigma+\tilde \sigma)/2$ and $\tilde \varphi=(\sigma-\tilde \sigma)/2$. This yields a set of linear inequalities for the correlations of $\sigma$ and~$\tilde\sigma$, that in the case of independent Ising models were originally established by Kasteleyn and Boel~\cite{KB}.
Moreover, the desired linear identities follow from the crucial observation that the correlations of $\varphi$ and $\tilde \varphi$ may be forced to vanish by properly choosing the order of spin insertions on the 
boundary of a planar graph.

We also show that the same relations are satisfied by products of Pfaffians. Since the correlations of $\sigma$ and $\tilde \sigma$ factorize in the noninteracting case, 
a unified picture arises:
The boundary correlations of $\sigma$ are given by {Pfaffians} of their respective two-point functions, whereas the mixed correlations of $\varphi$ and $\tilde \varphi$ 
are given by analogous {determinants}. 
The latter may be thought of as an instance of the fermionic Wick's rule for expectation values of products of noninteracting {Dirac} fermions.
This picture should be compared with the bosonic Wick's rule which states that higher moments of real Gaussian fields are {hafnians} and those of a complexified pair of independent Gaussian fields are {permanents} of their second moments.
Interestingly, in our setting total positivity of two-point boundary correlations turns out to be intrinsically related to the first Griffiths inequality for the spins $\varphi$ and $\tilde \varphi$.

Furthermore, using slightly more involved topological considerations we establish linear relations for the correlations of Kadanoff--Ceva fermions~\cite{KC,Fisher,CCK,ADTW}.
Again, in the non-interacting case this yields Pfaffian relations (that were obtained using random currents for the first time in~\cite{ADTW}) and new determinantal identities.

Our considerations also shed light on the classical but nonetheless intriguing phenomenon: a celebrated result of Fisher~\cite{Fisher} says that a single instance of a planar Ising model has a 
representation in terms of a {nonbipartite} dimer model, and another well known result of Dub\'{e}dat~\cite{dubedat} (see also \cite{bdt,DCL})
states that two independent copies of the Ising model can be realized as a {bipartite} dimer model. This matches the picture presented in this paper
as correlations of monomer insertions in nonbipartite dimers are given by Pfaffians, whereas those of bipartite dimers are determinants~\cite{Kasteleyn}. 

Finally, in a departure from the planar setup we use the switching lemma to derive the Simon and Gaussian inequality for the Ashkin--Teller model with nonpositive four body interaction on general graphs.
The Simon inequality classically implies a certain type of sharpness of the phase transition, namely that finite susceptibility implies exponential decay of correlations. 

This article is organized as follows: In Sect.~\ref{sec:AT} we define the Ashkin--Teller model and its random current representation, and prove switching lemmas for the correlations of $\sigma,\tilde \sigma$ and $\varphi,\tilde \varphi$ for general, not necessarily planar, graphs. In Sect.~\ref{sec:planar}, we consider the planar setting and characterize all correlations of $\varphi, \tilde \varphi$ that vanish for topological reasons. 
In Sect.~\ref{sec:pfdet}, we show that products of Pfaffians satisfy the same equations as the correlations of $\sigma, \tilde \sigma$ from Sect.~\ref{sec:planar}.
This implies that in the non-interacting case the correlations of $\sigma,\tilde \sigma$ are given by Pfaffians and those of $\varphi,\tilde \varphi$ are given by determinants.
In particular total positivity of certain matrices of two-point functions is recovered. 
In Sect.~\ref{sec:disorders} we study order-disorder correlations, and using slightly more complicated topological arguments (involving the notion of double covers) we obtain linear relations
for the correlations of Kadanoff--Ceva fermions.
Finally, in Sect.~\ref{sec:simon} we show the Simon and Gaussian inequality for general Ashkin--Teller models with nonpositive four body interaction. 
\section{The Ashkin--Teller model and the switching lemma} \label{sec:AT}
Let $G=(V,E)$ be a finite graph.
The Ashkin-Teller model~\cite{AT} (with free boundary conditions) is a probability measure on pairs of spin configurations $(\sigma,\tilde \sigma)\in \{ -1,1\}^V\times \{ -1,1\}^V  $
given by
\begin{align}\label{def:AT}
\IP_{\at} (\sigma,\tilde \sigma)=\frac1{\mathcal Z}\prod_{uv\in E} \exp\big(J_{uv} (\sigma_u \sigma_v+\tilde \sigma_u\tilde \sigma_v)+U_{uv} (\sigma_u \sigma_v\tilde \sigma_u\tilde \sigma_v+1)\big),
\end{align}
where $\mathcal Z$ is the partition function, and $J_{uv}$ and $U_{uv}$ are coupling constants. The constant $U_{uv}$ is added to the Hamiltonian for convenience as it does not change the probability measure.
Note that the case $U= 0$ is equivalent to two independent Ising models.
For $A,B\subseteq V$, let  $\sigma_A=\prod_{v\in A}\sigma_v$, $\tilde \sigma_B=\prod_{v\in B}\tilde\sigma_v$. We use the convention that $\sigma_{\emptyset}=\tilde \sigma_{\emptyset}=1$. Since the spins are $\pm 1$-valued, the law of the model is completely described by all spin correlation functions of the form
\[
 \langle \sigma_A \tilde \sigma_B \rangle_{\at} =  \sum_{\sigma,\tilde \sigma \in  \{ -1,1\}^V } \sigma_A \tilde \sigma_B  \IP_{\at} (\sigma,\tilde\sigma).
 \]
 
We want to study the random current representation of such correlations.
For the purpose of this article we follow~\cite{LisT,DCL} and use a different than the classical~\cite{GHS,aizenman} but equivalent definition of currents (see~\cite{DC} for an account of random currents in the Ising model). 
To this end, for a set of edges~$\eta$, let $\delta(\eta)$ be the set of vertices of odd degree in the graph $(V,\eta)$.
We say that a pair $ \n=(\omega,\eta)$, where $\omega,\eta\subseteq E$,
is a \emph{current} with \emph{sources} $A$ if $\eta\subseteq \omega$ and $\delta(\eta)=A$ (one can think of $\omega\setminus\eta$ as the edges with nonzero even values in the classical definition of a current, and 
of $\eta$ as the odd valued edges). 
We write $\Omega_A$ for the set of all currents with sources~$A$ when $|A|$ is even, and we set $\Omega_A=\emptyset$ otherwise.
We define the Ashkin--Teller weight of a current by
\begin{align}\label{eq:currweight}
w_{\at}(\n)=2^{k(\omega)} \prod_{e\in \eta} x_{e} \prod_{e\in\omega\setminus \eta} y_{e} ,
\end{align}
where $k(\omega)$ is the number of connected components of the graph $(V,\omega)$ including isolated vertices, and where the weights $x_e=x_e(U_e,J_e)$ and $y_e=y_e(U_e,J_e)$ are given by
\begin{align} \label{eq:CD}
x_e=e^{2U_e} \sinh(2J_e)\qquad \textnormal{and} \qquad  y_{e}=e^{2U_e}\cosh(2J_e)-1 .
\end{align}
We will write $Z_{\emptyset}=\sum_{\n\in \Omega_{\emptyset}} w_{\at}(\n)$ for the partition function of sourceless currents. 
Note that if 
\begin{align} \label{eq:nonnegative}
J\geq 0 \quad \text{and} \quad \cosh(2J_e)\geq e^{-2U_e},
\end{align} then the weights are nonnegative.

Our first result is a generalization of the switching lemma of Griffiths, Hurst, and Sherman~\cite{GHS} to the Ashkin--Teller model.
A related idea appeared already in the work of Chayes and Shtengel~\cite{ChSh}.
\begin{proposition}[Switching lemma for $\sigma$ and $\tilde \sigma$]\label{prop:switch}
Let $A,B\subseteq V$. Then for all coupling constants $J$ and $U$,
\begin{align} \label{eq:Switch}
\langle  \sigma_A \tilde \sigma_{B}\rangle_{\at} = \frac1{Z_{\emptyset}} \sum_{\n=(\omega,\eta)\in\Omega_{A\triangle B}   } w_{\at}(\n)\mathbf 1\{\omega \in \frak F_B\},
\end{align}
where $\triangle$ denotes the symmetric difference, and where $\omega \in\frak F_B$ if and only if each connected component of $(V,\omega)$ contains 
an \emph{even} number of vertices from~$B$.
\end{proposition}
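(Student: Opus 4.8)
The plan is a high-temperature-type expansion: expand the Boltzmann weight edge by edge, sum over the spins, and regroup the result into currents. Fix an edge $e=uv$ and set $a=\sigma_u\sigma_v$ and $b=\tilde\sigma_u\tilde\sigma_v$, both in $\{-1,1\}$. Since the edge factor is a function on $\{-1,1\}^2$ it equals its multilinear interpolant; evaluating at the four sign patterns and using \eqref{eq:CD} gives
\[
\exp\big(J_e(a+b)+U_e(ab+1)\big)=1+\tfrac{y_e}{2}+\tfrac{x_e}{2}\,a+\tfrac{x_e}{2}\,b+\tfrac{y_e}{2}\,ab .
\]
These five monomials, namely $1$, $\tfrac{x_e}{2}\sigma_u\sigma_v$, $\tfrac{x_e}{2}\tilde\sigma_u\tilde\sigma_v$, $\tfrac{y_e}{2}$, and $\tfrac{y_e}{2}\sigma_u\sigma_v\tilde\sigma_u\tilde\sigma_v$, serve as the per-edge building blocks.

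Expanding the product over $E$ then amounts to assigning one monomial to each edge. I would encode such a choice as a current $(\omega,\eta)$ together with two binary decorations, where $\omega$ is the set of edges not assigned the constant $1$ and $\eta\subseteq\omega$ is the set assigned one of the two $x_e/2$ terms, decorated by a color $\sigma$ or $\tilde\sigma$ on each edge of $\eta$ and a flag empty/both on each edge of $\omega\setminus\eta$. Every decorated configuration carries edge weight $2^{-|\omega|}\prod_{e\in\eta}x_e\prod_{e\in\omega\setminus\eta}y_e$, independent of the decoration, together with a product of spin factors.

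Next I would carry out the spin sum. Let $\eta_\sigma$ (resp.\ $\eta_{\tilde\sigma}$) be the set of edges whose chosen monomial carries a $\sigma$- (resp.\ $\tilde\sigma$-) product. Summing over $\sigma,\tilde\sigma\in\{-1,1\}^V$ and using $\sum_{s=\pm 1}s^m=2\cdot\mathbf 1\{m\text{ even}\}$, a decorated configuration survives with a factor $2^{2|V|}$ precisely when $\delta(\eta_\sigma)=A$ and $\delta(\eta_{\tilde\sigma})=B$, and contributes $0$ otherwise. From the decoration rules one checks $\eta=\eta_\sigma\triangle\eta_{\tilde\sigma}$, so on survivors $\delta(\eta)=\delta(\eta_\sigma)\triangle\delta(\eta_{\tilde\sigma})=A\triangle B$, i.e.\ $(\omega,\eta)\in\Omega_{A\triangle B}$.

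Finally I would fix a current $(\omega,\eta)\in\Omega_{A\triangle B}$ and count the surviving decorations producing it. Writing $\eta_\sigma=s\subseteq\omega$, the relations forcing the $\tilde\sigma$-color on $\eta$ and the flag on $\omega\setminus\eta$ give $\eta_{\tilde\sigma}=s\triangle\eta$; hence the constraint $\delta(\eta_{\tilde\sigma})=B$ is automatic once $\delta(s)=A$ and $\delta(\eta)=A\triangle B$, and survivors biject with $\{s\subseteq\omega:\delta(s)=A\}$. This set is empty or an affine coset of the $\mathbb F_2$ cycle space of $(V,\omega)$, so it has $2^{|\omega|-|V|+k(\omega)}$ elements when nonempty, and it is nonempty exactly when every connected component of $(V,\omega)$ contains an even number of vertices of $A$; since $\eta\subseteq\omega$ satisfies $\delta(\eta)=A\triangle B$, each component meets $A\triangle B$ in an even number of vertices, and this condition coincides with $\omega\in\mathfrak F_B$. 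Multiplying the surviving factor $2^{2|V|}$, the edge weight $2^{-|\omega|}\prod_{e\in\eta}x_e\prod_{e\in\omega\setminus\eta}y_e$, and the count $2^{|\omega|-|V|+k(\omega)}$ yields $2^{|V|}\,w_{\at}(\omega,\eta)\mathbf 1\{\omega\in\mathfrak F_B\}$; the same computation at $A=B=\emptyset$ gives $\mathcal Z=2^{|V|}Z_\emptyset$, and dividing produces \eqref{eq:Switch}. The main obstacle is this counting step, where the two source constraints collapse into one via $\eta_{\tilde\sigma}=\eta_\sigma\triangle\eta$ and the fiber is evaluated through the cycle-space dimension $|\omega|-|V|+k(\omega)$; this is precisely where the weight $2^{k(\omega)}$ in \eqref{eq:currweight} and the combinatorial description of $\mathfrak F_B$ originate.
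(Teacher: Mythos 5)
Your proof is correct. I checked the multilinear interpolation of the edge factor (the coefficients $1+\tfrac{y_e}{2}$, $\tfrac{x_e}{2}$, $\tfrac{x_e}{2}$, $\tfrac{y_e}{2}$ do follow from evaluating at the four sign patterns and \eqref{eq:CD}), the collapse of the two source constraints via $\eta_{\tilde\sigma}=\eta_\sigma\triangle\eta$, and the fiber count $2^{|\omega|-|V|+k(\omega)}$; the bookkeeping $2^{2|V|}\cdot 2^{-|\omega|}\cdot 2^{|\omega|-|V|+k(\omega)}=2^{|V|}2^{k(\omega)}$ reproduces the paper's constants exactly. However, your route is genuinely different from the paper's. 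The paper changes variables to $(\sigma,\tau)$ with $\tau=\sigma\tilde\sigma$, writes the edge factor as $1+\delta_{\tau_u\tau_v}(x_{uv}\sigma_u\sigma_v+y_{uv})$, and sums out $\tau$ \emph{first}: since $\tau$ must be constant on the clusters of $\omega$, this produces the weight $2^{k(\omega)}$ directly, and the indicator $\mathbf 1\{\omega\in\frak F_B\}$ arises from a sign-flipping argument on $\tau_B$ over a cluster containing an odd number of $B$-vertices; only afterwards is the remaining product expanded over $\eta\subseteq\omega$ and $\sigma$ summed. You instead expand both spin families symmetrically into the five monomials per edge, sum \emph{both} spins to obtain two independent source constraints $\delta(\eta_\sigma)=A$ and $\delta(\eta_{\tilde\sigma})=B$, and recover $2^{k(\omega)}$ not from a spin sum but by identifying the surviving decorations with the coset $\{s\subseteq\omega:\delta(s)=A\}$ of the $\mathbb{F}_2$ cycle space, with $\frak F_B$ emerging as the solvability criterion (using, as the paper also notes after the proposition, that $\frak F_{A\triangle B}\cap\frak F_A=\frak F_{A\triangle B}\cap\frak F_B$ once $\delta(\eta)=A\triangle B$ forces $\omega\in\frak F_{A\triangle B}$). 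The two arguments are linked by the standard duality between summing an auxiliary spin constant on clusters ($2^{k(\omega)}$ configurations) and counting even subgraphs ($2^{|\omega|-|V|+k(\omega)}$ of them). What the paper's version buys is the structural interpretation stressed in its Remark~1 --- currents as a simultaneous FK representation for $\tau$ and a high-temperature expansion for $\sigma$, which is ``exactly what makes the switching lemma work'' --- and a shorter computation; what yours buys is a fully elementary, $\sigma\leftrightarrow\tilde\sigma$-symmetric derivation in which both the cluster weight and the $\frak F_B$ condition are pure $\mathbb{F}_2$ linear algebra, with no auxiliary spin needed.
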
 

 \begin{proof}
Consider the spin $\tau_v=\sigma_v\tilde \sigma_v$. Writing $\delta_{\tau_u\tau_v} $ for the Kronecker delta, we have
\begin{align*}
\exp\big(J_{uv} (\sigma_u \sigma_v+\tilde \sigma_u\tilde \sigma_v)+& U_{uv}( \sigma_u \sigma_v\tilde \sigma_u\tilde \sigma_v+1)\big) =1+ \delta_{\tau_u\tau_v}(x_{uv} \sigma_u\sigma_v+y_{uv}  ),
\end{align*}
where we use that the spins are $\pm 1$-valued.
We can now expand these factors and write
\begin{align*}
\mathcal Z \langle \sigma_A \tilde \sigma_{B}\rangle_{\at} &=\mathcal Z\langle \sigma_{A\triangle B}  \tau_{B}\rangle_{\at}\\
&= \sum_{\sigma,\tau} \sigma_{ A\triangle B} \tau_B\prod_{uv\in E}\big(1+ \delta_{\tau_u\tau_v}\big(x_{uv} \sigma_u\sigma_v+y_{uv}  ) \big)\\
&= \sum_{\sigma,\tau} \sigma_{ A\triangle B} \tau_B \sum_{\omega\subseteq E} \prod_{uv\in \omega} \delta_{\tau_u\tau_v}(x_{uv} \sigma_u\sigma_v+y_{uv}  )\\
&= \sum_{\sigma}  \sum_{\omega\subseteq E}\mathbf 1\{\omega\in \frak F_B\}2^{k(\omega)} \sigma_{ A\triangle B} \prod_{uv\in \omega} ( x_{uv} \sigma_u\sigma_v+y_{uv}  )\\
&=\sum_{\sigma}   \sum_{\omega\subseteq E} \mathbf 1\{\omega\in \frak F_B\}2^{k(\omega)} \sum_{\eta \subseteq \omega} \sigma_{ A\triangle B\triangle \delta(\eta)} 
\prod_{e\in  \eta}x_e \prod_{e\in \omega \setminus \eta}y_e  \\
&={2^{|V|}}\sum_{\omega\subseteq E} \mathop{\sum_{\eta \subseteq \omega}}_{\delta(\eta)=A\triangle B}  \mathbf 1\{\omega\in \frak F_B\}
2^{k(\omega)}\prod_{e\in  \eta}x_{e} \prod_{e\in \omega \setminus \eta}y_{e}  \\
&= {2^{|V|}}\sum_{\n\in \Omega_{A \triangle B}} w_{\at}(\n) \mathbf 1\{\omega\in \frak F_B\}. 
\end{align*}
Indeed, for a fixed $\omega$, summing out the $\tau$ variable results in the factor $2^{k(\omega)}$ as $\tau$ has to be constant on the clusters of $\omega$. Moreover, the factor $\mathbf 1\{\omega\in \frak F_B\}$ appears since if there exists a cluster of $\omega$ with an odd number of vertices in~$B$, then $\tau_B$ changes sign when one flips the spin of this cluster, and hence the sum is zero.
Furthermore, summing over the $\sigma$ variable results in restricting the sum to those $\eta$ with sources at $A\triangle B$, as otherwise $\sigma_{ A\triangle B\triangle \delta(\eta)}$
changes sign as one flips the spin of a vertex in $A\triangle B\triangle \delta(\eta)$.
Taking $A=B=\emptyset$ gives $\mathcal Z=2^{|V|}Z_{\emptyset}$ which completes the proof.
 \end{proof}
 
 \begin{remark}
Based on the above computation one can think of currents as a simultaneous FK representation for the spins $\tau=\sigma\tilde\sigma$, and a high-temperature expansion for $\sigma$.
This mixture of representations is exactly what makes the switching lemma work.
\end{remark}
 
We note that for $U=0$ we recover (a special case of) the classical switching lemma for the Ising model.
We also note that for $(\omega,\eta)\in\Omega_{A\triangle B}$, we have $\omega\in\mathfrak F_{A\triangle B}$, and hence we could as well replace $\mathfrak F_B$ by $\mathfrak F_A$ in the statement of the proposition as $\mathfrak F_{A\triangle B}\cap \mathfrak F_{A}= \mathfrak F_{A\triangle B}\cap \mathfrak F_{B}$.
Moreover, since $\mathfrak F_B=\emptyset$ if $|B|$ is odd, the above correlation functions vanish unless $|A|$ and $|B|$ are even. This can also be seen directly from the definition of the Ashkin--Teller model.

A few words should be also devoted to the name of this result (which is arguably more fitting in the original formulation involving a counting of subgraphs of a given multigraph, see e.g.~\cite{aizenman,ADCS}).
Indeed, as long as $A\triangle B$ is fixed, $B$ enters the expression on the right-hand side of~\eqref{eq:Switch} only through the indicator function that restricts the sum over currents in $\Omega_{A\triangle B}$ 
to those for which $\omega \in \mathfrak {F}_B$. This means that switching from $\tilde \sigma$ to $\sigma$ amounts to only changing this indicator function, and hence the name is justified. For instance, a direct consequence is the following inequality.
\begin{corollary} \label{cor:easy} If the coupling constants are as in~\eqref{eq:nonnegative}, then
\[
\langle \sigma_{A }\sigma_B \rangle \geq \langle  \sigma_A \tilde \sigma_B\rangle.
\]
\end{corollary}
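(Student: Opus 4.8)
The plan is to read off both correlation functions from the switching lemma (Proposition~\ref{prop:switch}) and then compare the two resulting current sums term by term. The correlation $\langle \sigma_A \tilde \sigma_B\rangle$ is given directly by~\eqref{eq:Switch} as a sum over $\Omega_{A\triangle B}$ with weights $w_{\at}(\n)\mathbf 1\{\omega\in\mathfrak F_B\}$. For the other correlation I would first use that the spins are $\pm1$-valued to write $\sigma_A\sigma_B=\sigma_{A\triangle B}$, so that $\langle\sigma_A\sigma_B\rangle=\langle\sigma_{A\triangle B}\tilde\sigma_\emptyset\rangle$. Applying~\eqref{eq:Switch} with the two source sets taken to be $A\triangle B$ and $\emptyset$, the sources are again $A\triangle B$ (since $(A\triangle B)\triangle\emptyset=A\triangle B$), while the indicator becomes $\mathbf 1\{\omega\in\mathfrak F_\emptyset\}$, which equals $1$ for every $\omega$ because each component trivially contains an even (namely zero) number of vertices of $\emptyset$. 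Hence $\langle\sigma_A\sigma_B\rangle=\frac1{Z_\emptyset}\sum_{\n\in\Omega_{A\triangle B}}w_{\at}(\n)$, an unrestricted sum over exactly the same current set.

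With both quantities expressed over $\Omega_{A\triangle B}$, the comparison is immediate: the two sums carry identical weights and differ only by the factor $\mathbf 1\{\omega\in\mathfrak F_B\}\leq 1$ present in the expression for $\langle\sigma_A\tilde\sigma_B\rangle$. The one remaining input is nonnegativity of the weights. Under~\eqref{eq:nonnegative} we have $\sinh(2J_e)\geq 0$ and $e^{2U_e}\cosh(2J_e)\geq 1$, so both $x_e\geq 0$ and $y_e\geq 0$ by~\eqref{eq:CD}, and since $2^{k(\omega)}>0$ every weight $w_{\at}(\n)$ is nonnegative. Therefore the term indexed by each $\n$ in the sum for $\langle\sigma_A\tilde\sigma_B\rangle$ is bounded above by the corresponding term for $\langle\sigma_A\sigma_B\rangle$, and summing over $\Omega_{A\triangle B}$ and dividing by $Z_\emptyset>0$ yields $\langle\sigma_A\sigma_B\rangle\geq\langle\sigma_A\tilde\sigma_B\rangle$.

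There is essentially no hard step here: the corollary follows directly from the switching lemma once one recognizes that $\langle\sigma_A\sigma_B\rangle$ and $\langle\sigma_A\tilde\sigma_B\rangle$ are represented by the same sum over currents with the same source set $A\triangle B$, the only difference being the clustering constraint $\mathfrak F_B$ forced by the presence of $\tilde\sigma$. The single point requiring a little care is the bookkeeping that identifies $\langle\sigma_A\sigma_B\rangle$ with the unrestricted sum (equivalently, the $\mathfrak F_\emptyset$ case), which relies on the $\pm1$-valued nature of the spins; given this, the inequality is just a termwise domination using nonnegativity of the weights under~\eqref{eq:nonnegative}.
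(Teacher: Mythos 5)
Your proof is correct and matches the paper's intended argument: the corollary is stated there as a direct consequence of Proposition~\ref{prop:switch}, precisely because $B$ enters the right-hand side of~\eqref{eq:Switch} only through the indicator $\mathbf 1\{\omega\in\mathfrak F_B\}$, so that $\langle\sigma_A\sigma_B\rangle=\langle\sigma_{A\triangle B}\rangle$ is the same current sum with the indicator removed, and termwise domination follows from nonnegativity of the weights under~\eqref{eq:nonnegative}. Your bookkeeping (applying the lemma with source sets $A\triangle B$ and $\emptyset$, and noting $\mathfrak F_\emptyset$ imposes no constraint) is exactly the right way to make this explicit.
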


We now turn our attention to correlation functions of the variables $\varphi,\tilde \varphi \in \{-1,0,1\}$ defined by
\[
\varphi_v =\frac{\sigma_v+\tilde \sigma_v}{2} \qquad \text{ and } \qquad \tilde \varphi_v=\frac{\sigma_v-\tilde \sigma_v}{2}.
\]
We write $\varphi_{A}=\prod_{v\in A}\varphi_v$, and $ \tilde \varphi_{A}=\prod_{v\in A}\tilde \varphi_v$ as before.
Note that $\varphi^3=\varphi$ and $\tilde \varphi^3 =\varphi$, and hence it is enough to look at correlations of order at most two.
Also note that $\varphi\tilde \varphi=0$, and we only need to consider insertions of $\varphi$ and $\tilde \varphi$ at disjoint sets of vertices.
It turns out that correlations of $\varphi$ and $\tilde \varphi$ have a natural representation in terms of currents as well, but this time the corresponding indicator function
restricting the sum over currents has a more explicit topological meaning.

\begin{proposition}[Switching lemma for $\varphi$ and $\tilde \varphi$]\label{lem:RB}
Let $A=A_1 \cup A_2, B= B_1 \cup B_2\subseteq V$ be such that $A_1,A_2, B_1,B_2$ are pairwise disjoint. Then for all coupling constants $J$ and $U$,
\begin{align*}
\langle \varphi_{A_1} \varphi^2_{A_2} \tilde \varphi_{B_1} \tilde \varphi^2_{B_2}\rangle = \frac{1}{Z_{\emptyset}} \sum_{\n=(\omega,\eta)\in \Omega_{A_1 \cup B_1} }w_{\at}(\n)2^{-k_{A\cup B}(\omega)}\mathbf{1}
\{ A \ncon{\omega} B\},
\end{align*}
where $k_{A\cup B}(\omega)$ is the number of connected components of $\omega$ intersecting~$A\cup B$, and where $A \ncon{\omega} B$ means that no vertex in $A$ is connected to a vertex in $B$ via a path of edges in $\omega$.
\end{proposition}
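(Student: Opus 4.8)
The plan is to follow the same strategy as in the proof of Proposition~\ref{prop:switch}, working with the pair $(\sigma,\tau)$ where $\tau_v=\sigma_v\tilde\sigma_v$, which carries the same information as $(\sigma,\tilde\sigma)$. The first step is to rewrite the observable in these variables. Since $\varphi_v=\sigma_v(1+\tau_v)/2$, $\tilde\varphi_v=\sigma_v(1-\tau_v)/2$, $\varphi_v^2=(1+\tau_v)/2$ and $\tilde\varphi_v^2=(1-\tau_v)/2$, and since $A_1,A_2,B_1,B_2$ are pairwise disjoint, the product collapses to
\[
\varphi_{A_1}\varphi^2_{A_2}\tilde\varphi_{B_1}\tilde\varphi^2_{B_2}=\sigma_{A_1\cup B_1}\,\mathbf 1\{\tau\equiv 1 \text{ on } A\}\,\mathbf 1\{\tau\equiv -1 \text{ on } B\},
\]
where $A=A_1\cup A_2$ and $B=B_1\cup B_2$. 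Thus the correlation becomes the $\sigma_{A_1\cup B_1}$-weighted expectation over all configurations for which $\tau$ is clamped to $+1$ on $A$ and to $-1$ on $B$.

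Next I would expand the Boltzmann weights exactly as before, using $\exp(\dots)=1+\delta_{\tau_u\tau_v}(x_{uv}\sigma_u\sigma_v+y_{uv})$, and sum over subsets $\omega\subseteq E$. As in Proposition~\ref{prop:switch}, the product of Kronecker deltas forces $\tau$ to be constant on each connected component of $(V,\omega)$. The crucial new step is the summation over $\tau$ subject to the two clamping constraints. A cluster of $\omega$ meeting both $A$ and $B$ cannot simultaneously carry $\tau\equiv 1$ and $\tau\equiv -1$, so the $\tau$-sum vanishes unless no cluster connects $A$ to $B$; this is precisely the event $A\ncon{\omega} B$. On this event, every cluster meeting $A$ (resp.\ $B$) has its value of $\tau$ frozen to $+1$ (resp.\ $-1$), while each of the remaining clusters is free and contributes a factor of $2$. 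Since the number of clusters meeting $A\cup B$ is $k_{A\cup B}(\omega)$, the $\tau$-sum equals $2^{k(\omega)-k_{A\cup B}(\omega)}\mathbf 1\{A\ncon{\omega}B\}$, which is exactly what converts the weight's prefactor $2^{k(\omega)}$ into $w_{\at}(\n)2^{-k_{A\cup B}(\omega)}$.

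The remainder is routine and identical to the first proof: expanding $\prod_{uv\in\omega}(x_{uv}\sigma_u\sigma_v+y_{uv})=\sum_{\eta\subseteq\omega}\sigma_{\delta(\eta)}\prod_{e\in\eta}x_e\prod_{e\in\omega\setminus\eta}y_e$ and summing over $\sigma$ annihilates all terms except those with $\delta(\eta)=A_1\cup B_1$, producing a global factor $2^{|V|}$ and restricting the sum to currents $\n\in\Omega_{A_1\cup B_1}$. Dividing by $\mathcal Z=2^{|V|}Z_{\emptyset}$ then yields the claimed formula. I expect the main obstacle to be the bookkeeping in the $\tau$-sum of the second step: one must correctly identify that the clamping makes the constraint infeasible exactly on the connection event $A\con{\omega}B$, and that the surviving power of $2$ counts only the clusters disjoint from $A\cup B$, giving the factor $2^{-k_{A\cup B}(\omega)}$ relative to $w_{\at}(\n)$. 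Everything else transfers verbatim from Proposition~\ref{prop:switch}.
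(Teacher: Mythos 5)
Your proof is correct, and it takes a genuinely different (though closely related) route from the paper's. The paper first writes $\varphi_v^2=\sigma_v\varphi_v$, $\tilde\varphi_v^2=\sigma_v\tilde\varphi_v$ and expands the correlator as $2^{-|A\cup B|}\sum_{S\in\pev(A\cup B)}(-1)^{|S\cap B|}\langle\sigma_{(A_1\cup B_1)\triangle S}\tilde\sigma_S\rangle$, applies Proposition~\ref{prop:switch} to each term, and then evaluates $\sum_{S}(-1)^{|S\cap B|}\mathbf 1\{\omega\in\frak F_S\}$ by a sign-reversing involution $S\mapsto S\triangle\{u,v\}$ when $A\con{\omega}B$, obtaining $2^{|A\cup B|-k_{A\cup B}(\omega)}\mathbf 1\{A\ncon{\omega}B\}$ otherwise. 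You instead keep the constraint in indicator form, $\varphi_{A_1}\varphi^2_{A_2}\tilde\varphi_{B_1}\tilde\varphi^2_{B_2}=\sigma_{A_1\cup B_1}\mathbf 1\{\tau\equiv 1\text{ on }A\}\mathbf 1\{\tau\equiv -1\text{ on }B\}$, and rerun the expansion of Proposition~\ref{prop:switch} from scratch: the clamped $\tau$-sum directly yields $2^{k(\omega)-k_{A\cup B}(\omega)}\mathbf 1\{A\ncon{\omega}B\}$, and the $\sigma$-sum restricts to $\delta(\eta)=A_1\cup B_1$ as usual (note the $\sigma$- and $\tau$-sums factor for fixed $\omega$, which your argument uses implicitly and which is indeed true). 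The two proofs are essentially Fourier-dual: expanding your indicators via $\mathbf 1\{\tau_v=\pm 1\}=(1\pm\tau_v)/2$ reproduces exactly the paper's signed sum over $S$, so the involution argument and the clamping argument compute the same quantity on opposite sides of this expansion. Your route is shorter and more self-contained; the paper's route has the advantage of producing, as an intermediate object, the explicit signed linear combination of $\sigma,\tilde\sigma$ correlations, which is not a throwaway: that expansion is reused verbatim in Corollaries~\ref{cor:pf} and~\ref{cor:tot} and is what is matched against the Pfaffian identity of Proposition~\ref{lem:sousin} to prove Theorem~\ref{thm:varphi}. One small bookkeeping point in your write-up: the case $|A_1\cup B_1|$ odd is covered automatically, since then the $\sigma$-sum kills every term and $\Omega_{A_1\cup B_1}=\emptyset$ by convention, so both sides vanish.
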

\begin{proof}
For a finite set $X$, let $\pev(X)$ be the set of subsets of $X$ of \emph{even} cardinality. 
We have 
\[
\varphi_v^2=\frac{1+\sigma_v\tilde \sigma_v}2= \sigma_v \varphi_v, \qquad \text{and} \qquad \tilde \varphi_v^2=\frac{1-\sigma_v\tilde \sigma_v}2= \sigma_v \tilde \varphi_v.
\]
Therefore 
\begin{align*}
 \langle \varphi_{A_1} \varphi^2_{A_2} \tilde \varphi_{B_1} \tilde \varphi^2_{B_2}\rangle & = \langle \sigma_{A_2\cup B_2}\varphi_{A} \tilde \varphi_{B} \rangle \\
&=2^{-|A\cup B|}\sum_{S \in\pev( A\cup B)} (-1)^{|S\cap B|} \langle  \sigma_{(A_1\cup B_1)\triangle S} \tilde\sigma_S \rangle.
\end{align*}
By the switching lemma applied to each term on the right-hand side, we obtain $1/Z_{\emptyset}$ times
\begin{align*}
& 2^{-|A\cup B|} \sum_{\n=(\omega,\eta)\in \Omega_{A_1\cup B_1}   }w_{\at}(\n)\Big( \sum_{S \in\pev (A\cup B)} (-1)^{|S\cap B|}\mathbf 1\{\omega\in \frak F_S\}\Big).
\end{align*}
Hence it is enough to show that the second sum is equal to
\begin{align*}
 2^{| A\cup B|-k_{A\cup B}(\omega)}\mathbf{1}\{ A \ncon{\omega} B\}.
\end{align*}
To this end note that if $A \ncon{\omega} B$, then the sign in the sum is constant and equal to one.
This accounts for the factor $2^{| A\cup B|-k_{A\cup B}(\omega)}$ which is the number of sets $S\in\pev( A\cup B)$ such that $\omega\in \frak F_S$. Otherwise, take $u\in A$ and $v \in B$ in the same connected component of $\omega$. Then $S\mapsto S\triangle\{u,v\}$ is a sign-reversing involution on sets satisfying $S\in\pev( A\cup B)$ and $\omega\in \frak F_S$. As a result, the above sum is zero.
\end{proof}

This identity is valid for general, not necessarily planar, graphs, and to the best of our knowledge, is new also in the noninteracting case.
Note that if $\{ A \ncon{\omega} B\}$, then necessarily $\omega \in \frak F_{A_1}\cap \frak F_{B_1}$. In particular the correlations above are zero if $|A_1|$ or $|B_1|$ is odd. Moreover they are nonnegative (satisfy the first Griffiths inequality) for coupling constants as in \eqref{eq:nonnegative}. 
When expanded into the correlations of $\sigma$ and $\tilde \sigma$, this yields a collection of linear inequalities that were first obtained for $U=0$ by Kasteleyn and Boel~\cite[Eq. 31]{KB} as a special case of what they call maximal $\Lambda$-inequalities. However, no combinatorial interpretation of the inequalities was given in~\cite{KB}.

A crucial observation is that due to the disconnection condition $A \ncon{\omega} B$ these correlations vanish for topological reasons if the sets $A$ and $B$ are properly chosen.
For example, let $A=A_1=\{x,y\}$ and let $B=B_1$ be such that every path connecting $x$ and $y$ intersects $B$. Then there are no currents $\n\in \Omega_{A \cup B}$ such that $A \ncon{\omega} B$. Hence by the identity above we have $\langle  \varphi_{x}  \varphi_y  \tilde \varphi_B \rangle=0$.
A similar idea, but implemented in the context of planar topology, will be used in the next section to show that the correlations of $\varphi$ and $\tilde \varphi$ vanish
if the insertions of spins are properly chosen on the boundary of a planar graph.

\section{Planarity}\label{sec:planar}
In this section we consider a finite connected planar graph $G=(V,E)$ embedded in the plane. 
We will study correlation functions of the form
\begin{align*} \label{eq:corform}
\langle \varphi_{A_1} \varphi^2_{A_2} \tilde \varphi_{B_1} \tilde \varphi^2_{B_2}\rangle 
\end{align*}
as in Proposition~\ref{lem:RB}, where $A_1,A_2,B_1,B_2$ all lie on the outer face of $G$. 
In order to state our results, we need to introduce a fair number of definitions:
The vertices in $A_2 \cup B_2$ will be referred to as \emph{doubled}. We define $\nod$ to be the set of vertices with multiplicities, i.e., the set where each doubled vertex is included as two copies $v^<$ and $v^>$. We will refer to the elements of $\nod$ as \emph{nodes}.
We assume that $|\nod|$ is even (otherwise the correlation function vanishes), and fix a counterclockwise order $v_1,v_2,\ldots, v_{2n}$ on 
$\nod$ which agrees with the placement of the nodes on the boundary, and where for each doubled node $v$, the copy $v^>$ comes immediately after~$v^<$.
We split the nodes $\nod$ into \emph{even} $\nod_e$ and \emph{odd} $\nod_o$ according to the index in this order.

We will write $A=A_1\cup A_2$ and $B=B_1 \cup B_2$, and we will think of each node in $\nod$ that comes from $A$ (resp.\ $B$) as \emph{red} (resp.\ \emph{blue}). We write $\red$ and $\blue$ for the set of red and blue nodes. We will say that the choice of $\blue$ (and hence automatically $\red=\nod\setminus \blue$) 
is a \emph{coloring} of the nodes.
Finally we partition $\nod$ into \emph{sources} $\source_+$ and \emph{sinks} $\source_-$ by the rule
\[
\origin= (\nod_o\cap \blue) \cup (\nod_e\cap \red) \quad \text{and} \quad  \sink= (\nod_e\cap \blue) \cup (\nod_o\cap \red).
\]
This definition implies that as one goes along the boundary, the nodes alternate between sources and sinks as long as their color does not change, and they keep the same orientation (sink or source)
whenever the color changes.
We say that a coloring $\blue$ is \emph{balanced} if the numbers of resulting sources and sinks are equal.

We will consider partitions $\pi$ of $\nod$. We say that two disjoint sets $P,P'\subset \nod$ cross if one can find $u,v\in P$ and $u',v'\in P$
such that $u<u'<v<v'$ or $u'<u<v'<v$ in the order defined above.
An element $P\in\pi$ of a partition will be called a component. A partition $\pi$ is \emph{planar} (or \emph{noncrossing}) if no two components of $\pi$ cross.
A partition is called \emph{even} if each of its components contains an even number of nodes (possibly zero).  A pairing is a partition whose components all have two elements.
We say that that $\pi$ is \emph{compatible with} a coloring $\blue$ if all nodes in every component of $\pi$ are of the same color.
A coloring $\blue$ is \emph{realizable} if there exists an even planar partition that is compatible with $\blue$.

We are now able to state the first preliminary result.
\begin{figure}
		\begin{center}
			\includegraphics[scale=1.1]{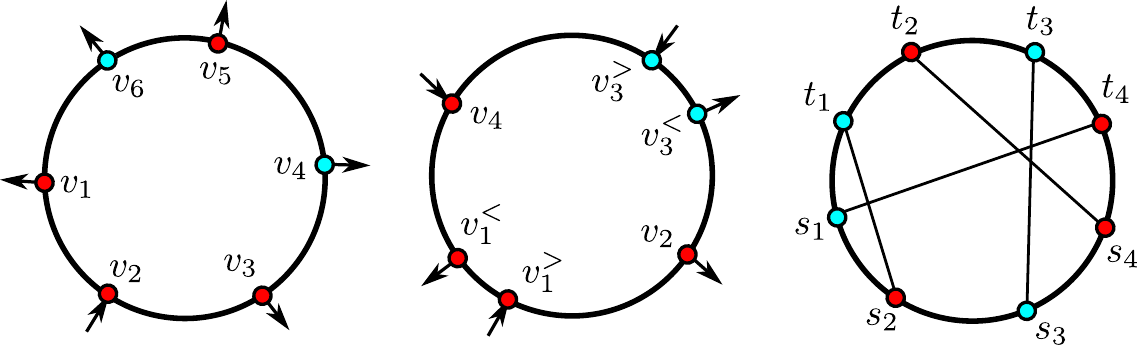}  
		\end{center}
		\caption{Left: an unbalanced coloring corresponding to $\langle \varphi_{v_1}\varphi_{v_2} \varphi_{v_3}\tilde \varphi_{v_4} \varphi_{v_5} \tilde \varphi_{v_6}  \rangle=0$. The arrows pointing inside (resp.\ outside) represent sources (resp.\ sinks). In particular $\origin=\{v_2\}$. 
		Middle: a balanced coloring corresponding to 
		$\langle \varphi^2_{v_1}\varphi_{v_2} \tilde\varphi^2_{v_3} \varphi_{v_4} \rangle$. Here $A\cup B=\{v_1,v_2,v_3,v_4\}$ and $\nod=\{v^<_1,v^>_1,v_2,v^<_3,v^>_3,v_4\}$. Right: the balanced coloring from Corollary~\ref{cor:tot},
		and a pairing $\pi\in \Pi(\origin,\sink)$ with $\textnormal{xg}(\pi)=4$,
		where $\origin =\{s_1,s_2,s_3,s_4\}$ and $\sink =\{t_1,t_2,t_3,t_4\}$}
	\label{fig:color}
\end{figure}

\begin{lemma}\label{lem:bal}
A coloring $\blue$ is realizable if and only if it is balanced. 
\end{lemma}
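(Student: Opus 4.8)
The plan is to organize both implications around a single sign. To each node $v\in\nod$ associate a parity $p_v=+1$ if $v\in\nod_o$ and $p_v=-1$ if $v\in\nod_e$, and a color sign $c_v=+1$ if $v\in\blue$ and $c_v=-1$ if $v\in\red$. Checking the four cases in the definitions of $\origin$ and $\sink$, one sees that $v$ is a source exactly when $p_vc_v=+1$ and a sink exactly when $p_vc_v=-1$. Hence $|\origin|-|\sink|=\sum_{v\in\nod}p_vc_v$, so that $\blue$ is balanced if and only if $\sum_{v\in\nod}p_vc_v=0$. With this reformulation the statement becomes a purely combinatorial claim about $2n$ cyclically ordered two-colored points.

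For the direction that realizability implies balance, I would fix an even planar partition $\pi$ compatible with $\blue$ and show that each component already contributes $0$. Since $\pi$ is compatible, every component $P$ is monochromatic, so its contribution to $\sum_{v}p_vc_v$ equals $\pm\sum_{v\in P}p_v$ (the sign being the common color of $P$). The key step is that any two elements $a<b$ of $P$ that are consecutive within $P$ have opposite parity: planarity forces every node lying strictly between $a$ and $b$ to belong to a component entirely contained in the interval $(a,b)$, for otherwise such a component would cross $P$; since $\pi$ is even, the number of these nodes is even, whence $b-a$ is odd. Thus the parities along $P$ alternate, and as $|P|$ is even we get $\sum_{v\in P}p_v=0$. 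Summing over components yields $\sum_v p_vc_v=0$, i.e.\ balance.

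For the converse I would induct on $n=|\nod|/2$, the case $n=0$ being the empty partition. Given a balanced coloring with $n\ge1$, the equality $|\origin|=|\sink|=n\ge1$ rules out all nodes having the same orientation; but if the colors alternated strictly around the boundary then every transition between consecutive nodes would preserve the orientation, forcing all orientations to coincide. Hence two cyclically consecutive nodes share a color, and since strict alternation of an even number of nodes also renders the wrap-around pair bichromatic, one may in fact find such a pair $v_i,v_{i+1}$ with $1\le i\le 2n-1$. Being adjacent these have opposite parity and equal color, so one is a source and the other a sink. Deleting them leaves the parities of all remaining nodes unchanged (indices below $i$ are fixed, those above $i+1$ shift by two), so the reduced coloring is again balanced and, by induction, admits an even planar partition $\pi'$; adjoining the block $\{v_i,v_{i+1}\}$, which cannot cross any other component because it is adjacent, produces an even planar partition compatible with $\blue$.

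I expect the main obstacle to be the structural claim in the forward direction that consecutive elements of a component have opposite parity, i.e.\ the ``nesting'' property that everything strictly between two consecutive block elements lies inside a union of blocks contained in that gap. Once this property of noncrossing partitions is in hand, the parity bookkeeping in the forward direction and the peeling construction in the inductive converse are both routine.
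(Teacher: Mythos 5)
Your proof is correct and takes essentially the same approach as the paper's: your forward direction is the paper's component-wise alternation argument (the sign bookkeeping with $p_vc_v$ just makes explicit that a monochromatic component of an even planar partition alternates between sources and sinks, because planarity forces the gap between consecutive component elements to be a union of whole, even components), and your inductive peeling of an adjacent same-colored source--sink pair is precisely the paper's greedy construction of a compatible planar pairing. If anything, you are slightly more careful than the paper, which asserts the existence of a consecutive source--sink pair without spelling out the argument you give for finding one among linearly adjacent nodes.
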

\begin{proof}
Fix a realizable coloring $\blue$ and an even planar partition $\pi$ of $\nod$ that is compatible with $\blue$. Note that for each component $K$ of $\pi$,
the nodes alternate between sources and sinks as one goes around the outer face. Indeed, all these nodes have the same color since $\pi$ is compatible with $\blue$. Moreover
their parity must alternate since, by planarity, all the nodes between two consecutive nodes of $K$ must be matched together by $\pi$, and all components of $\pi$ contain an even number of sources.
This means that each component of $\pi$ has the same number of sources and sinks, and hence $\blue$ is balanced.

Now fix a balanced coloring $\blue$. We will construct a planar pairing that is compatible with $\blue$. 
We can always find a pair of consecutive nodes $v,v'$ such that $v\in \origin$ and $v'\in \sink$. By definition, they must be of the same color. 
We can hence pair them up, and iterate this procedure for the set of nodes $\nod'=\nod\setminus \{v,v'\}$ which again contains the same number of sources and sinks.
At the end, the resulting pairing is planar by construction.
\end{proof}

Note that each current $\n=(\omega,\eta)\in \Omega_{A_1\cup B_1}$ induces an even planar partition $\pi(\n)$. 
Moreover if $A \ncon{\omega} B$, then $\pi(\n)$ is also compatible with the corresponding coloring~$\blue$.
Recall that $\pev(\nod)$ is the set of subsets of $\nod$ of even cardinality.
\begin{corollary}[Unbalanced colorings]\label{cor:pf}
Let $\blue$ be an unbalanced coloring of~$\nod$.
Then for all coupling constants $J$ and $U$,
\begin{align} \label{eq:omg}
\langle \varphi_{A_1} \varphi^2_{A_2} \tilde \varphi_{B_1} \tilde \varphi^2_{B_2}\rangle =\frac{1}{2^{|\nod|}}\sum_{S \in\pev( \nod)} (-1)^{|S\cap \blue|} \langle\sigma_{\nod \setminus S} \tilde \sigma_S \rangle= 0.
\end{align}
Here, when evaluating the correlation function we naturally project the nodes from $\nod$ onto the corresponding vertices, i.e., $\sigma_{v^<} = \sigma_{v^>}=\sigma_v$ for every doubled vertex~$v$.
If $U=0$, then $\sigma$ and $\tilde \sigma$ are independent and identically distributed, and in particular 
\[
\langle \sigma_{\nod\setminus S} \tilde \sigma_S\rangle=\langle \sigma_{\nod\setminus S}\rangle\langle \sigma_S\rangle.
\]
Moreover, if $|\blue|$ is even, then the second equality in~\eqref{eq:omg} can be rewritten as
\begin{align}\label{eq:pf}
2\langle \sigma_{\nod}\rangle = \sum_{S \in\pev( \nod)\setminus \{\emptyset,\nod\}} (-1)^{|S\cap \blue|+1} \langle \sigma_S\rangle\langle \sigma_{\nod \setminus S}\rangle.
\end{align}
\end{corollary}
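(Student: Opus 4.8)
The plan is to establish the three assertions in turn: first derive the node-indexed expansion (the first equality), then use the current representation together with the topology of planar currents to force the vanishing, and finally specialise to $U=0$ to obtain the factorised identity~\eqref{eq:pf}.

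For the first equality I would expand the left-hand side directly over assignments of either $\sigma$ or $\tilde\sigma$ to each node, where a node carrying $\tilde\sigma$ contributes a sign $-1$ precisely when it is blue. Using $\varphi_v=\tfrac12(\sigma_v+\tilde\sigma_v)$, $\tilde\varphi_v=\tfrac12(\sigma_v-\tilde\sigma_v)$, together with the identities $\varphi_v^2=\sigma_v\varphi_v$ and $\tilde\varphi_v^2=\sigma_v\tilde\varphi_v$ from the proof of Proposition~\ref{lem:RB}, each single node $v$ contributes a factor $\tfrac12(\sigma_v\pm\tilde\sigma_v)$ and each doubled vertex $v$ a factor $\tfrac14\big(\sigma_{v^<}\sigma_{v^>}+\tilde\sigma_{v^<}\tilde\sigma_{v^>}\pm\sigma_{v^<}\tilde\sigma_{v^>}\pm\sigma_{v^>}\tilde\sigma_{v^<}\big)$, with the signs dictated by the colour as above. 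Collecting the terms indexed by the set $S\subseteq\nod$ of nodes carrying $\tilde\sigma$ produces the coefficient $(-1)^{|S\cap\blue|}$ and the global prefactor $2^{-|\nod|}$. Taking expectations and invoking the two spin-flip symmetries $\sigma\mapsto-\sigma$ and $\tilde\sigma\mapsto-\tilde\sigma$ of the measure~\eqref{def:AT}, which force $\langle\sigma_C\tilde\sigma_D\rangle=0$ whenever the projected $C$ or $D$ has odd size, I would then discard all terms with $|S|$ odd (note that $|S|$ has the same parity as the number of vertices of odd multiplicity in $S$), leaving exactly the sum over $S\in\pev(\nod)$.

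The heart of the matter is the second equality, and I expect the topological step to be the main obstacle. Here I would invoke the current representation of Proposition~\ref{lem:RB}, which writes the correlation as $Z_\emptyset^{-1}\sum_{\n=(\omega,\eta)\in\Omega_{A_1\cup B_1}}w_{\at}(\n)2^{-k_{A\cup B}(\omega)}\mathbf 1\{A\ncon{\omega}B\}$, so that it suffices to show that this sum is empty whenever $\blue$ is unbalanced. By the observation preceding the corollary, any current contributing to the sum, that is, any $\n$ with $A\ncon{\omega}B$, induces an even planar partition $\pi(\n)$ of $\nod$, even because each cluster of $\omega$ meets $A_1\cup B_1$ in an even number of sources while each doubled vertex contributes two nodes, and planar because distinct clusters are disjoint connected subgraphs meeting the outer face only at their nodes; moreover $\pi(\n)$ is compatible with $\blue$, since the disconnection $A\ncon{\omega}B$ makes every cluster monochromatic. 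Thus a single contributing current would exhibit $\blue$ as realizable, whence Lemma~\ref{lem:bal} would force $\blue$ to be balanced. Taking the contrapositive, an unbalanced $\blue$ admits no contributing current, so the sum, and hence the correlation, vanishes; combined with the first equality this gives~\eqref{eq:omg}. The inputs (the current representation and Lemma~\ref{lem:bal}) are already in place, so the remaining work is the careful verification that $\pi(\n)$ is even, planar, and compatible.

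The $U=0$ specialisation and~\eqref{eq:pf} are then bookkeeping. When $U=0$ the measure~\eqref{def:AT} factorises into two independent and identically distributed Ising copies, so $\langle\sigma_{\nod\setminus S}\tilde\sigma_S\rangle=\langle\sigma_{\nod\setminus S}\rangle\langle\sigma_S\rangle$. For~\eqref{eq:pf} I would start from the vanishing of $\sum_{S\in\pev(\nod)}(-1)^{|S\cap\blue|}\langle\sigma_{\nod\setminus S}\rangle\langle\sigma_S\rangle$, peel off the two extreme terms $S=\emptyset$ and $S=\nod$, each of which equals $\langle\sigma_\nod\rangle$ because $\langle\sigma_\emptyset\rangle=1$ and $(-1)^{|\blue|}=1$ when $|\blue|$ is even, so that together they contribute $2\langle\sigma_\nod\rangle$, and finally move the remaining terms to the other side with an extra sign, which yields exactly~\eqref{eq:pf}.
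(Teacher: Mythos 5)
Your proposal is correct and takes essentially the same route as the paper: the first equality is the expansion already carried out in the proof of Proposition~\ref{lem:RB} (restated at node level), the vanishing is precisely the paper's one-line argument combining the observation that any contributing current induces an even planar partition $\pi(\n)$ compatible with $\blue$ with Lemma~\ref{lem:bal}, and the $U=0$ factorisation and the peeling off of $S=\emptyset$ and $S=\nod$ match the paper's bookkeeping for \eqref{eq:pf}. The only difference is that you spell out verifications (evenness, planarity, and compatibility of $\pi(\n)$, and the spin-flip argument discarding odd $|S|$) that the paper treats as immediate.
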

\begin{proof}
By Lemma~\ref{lem:bal}, the right-hand side of the identity from Proposition~\ref{lem:RB} is zero since there exist no currents $\n\in \Omega_{\nod}$ such that $\pi(\n)$ is compatible with $\blue$.
\end{proof}

This result says that planar topology causes certain correlations to vanish. More precisely we established that out of the possible $4^{|A\cup B|}$ correlations of the form $\langle \varphi_{A_1} \varphi^2_{A_2} \tilde \varphi_{B_1} \tilde \varphi^2_{B_2}\rangle$, only 
those corresponding to balanced colorings are possibly nonzero.

Note that \eqref{eq:pf} is a recurrence relation since the terms on the right-hand side satisfy $|S|<|\nod|$. 
Therefore it can be used to express many-point correlation functions in terms of the respective two-point functions.
In the next section we will show that the same identities are satisfied by Pfaffians.

\begin{figure}
		\begin{center}
			\includegraphics[scale=1.1]{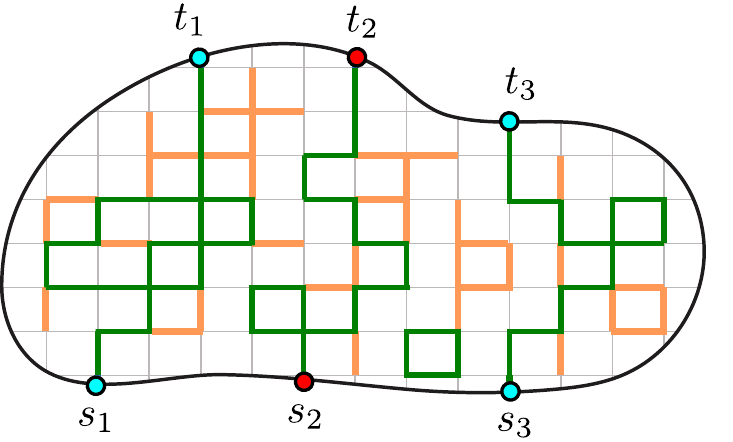}  
		\end{center}
		\caption{
		A current $\n=(\omega,\eta)$ with $\omega\in \mathfrak{P}_{\origin, \sink}$ where $\origin=\{s_1,s_2,s_3\}$ and 
		$\sink=\{t_1,t_2,t_3 \}$. The green edges represent $\eta$ and the orange edges represent $\omega\setminus \eta$ }
	\label{fig:parallel}
\end{figure}

\begin{corollary}[A balanced parallel coloring]\label{cor:tot}
Let $\origin=\{ s_1,s_2,\ldots,s_n \} $ and $\sink=\{ t_1, t_2, \ldots t_n\} $ be a partition of $\nod$ such that 
\begin{align*}
s_1,s_2,\ldots,s_n, t_n, t_{n-1}, \ldots t_1 
\end{align*}
is a counterclockwise order on $\nod$ around the outer face.
The corresponding coloring $\blue$ is shown in Fig.~\ref{fig:color} and Fig~\ref{fig:parallel}.
We define
\[
\frak P_{\origin, \sink} = \{\omega\in \mathfrak F_{\mathcal N}:  s_i \con{\omega} t_i \text{ for all } i , \  s_i \ncon{\omega} t_j \text{ for all } i\neq j\},
\]
where $ s_i \con{\omega} t_j$ (resp.\ $ s_i \ncon{\omega} t_j$) means that $s_i$ and $t_i$ are (resp.\ not) connected by a path of edges in $\omega$.
Then for topological reasons, we have that $\frak P_{\origin, \sink}=\{\omega \in \mathfrak F_{\mathcal{N}}: \red \ncon{\omega}\blue\}$,
and therefore
\begin{align*} 
\langle \varphi_{A_1} \varphi^2_{A_2} \tilde \varphi_{B_1} \tilde \varphi^2_{B_2}\rangle&=\frac1{2^{2n}}\sum_{S \in \pev (\nod)} (-1)^{|S\cap \blue|} \langle  \sigma_{\nod \setminus S}\tilde \sigma_S\rangle \\
&= \frac{1}{Z_{\emptyset}2^{n}}\sum_{\n=(\omega,\eta)\in \Omega_{\nod}}w_{\at}(\n)\mathbf 1\{ \omega \in \mathfrak P_{\origin,\sink}\}.
\end{align*}
\end{corollary}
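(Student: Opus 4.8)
The plan is to prove the two equalities in turn. The first one---the expansion of $\langle\varphi_{A_1}\varphi^2_{A_2}\tilde\varphi_{B_1}\tilde\varphi^2_{B_2}\rangle$ into the signed sum $\tfrac1{2^{2n}}\sum_{S\in\pev(\nod)}(-1)^{|S\cap\blue|}\langle\sigma_{\nod\setminus S}\tilde\sigma_S\rangle$---is exactly the identity derived at the start of the proof of Corollary~\ref{cor:pf}: one uses $\varphi_v^2=\sigma_v\varphi_v$ and $\tilde\varphi_v^2=\sigma_v\tilde\varphi_v$, expands $\varphi_v=(\sigma_v+\tilde\sigma_v)/2$ and $\tilde\varphi_v=(\sigma_v-\tilde\sigma_v)/2$, and projects nodes to vertices. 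Since $|\nod|=2n$, this holds for any coloring and involves no topology, so I would simply invoke it. The real content is the second equality, and its heart is the asserted topological identity $\mathfrak{P}_{\origin,\sink}=\{\omega\in\mathfrak{F}_{\nod}:\red\ncon{\omega}\blue\}$.

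One inclusion is purely combinatorial. If $\omega\in\mathfrak{P}_{\origin,\sink}$, then $s_i\con{\omega}t_i$ for every $i$; hence $s_i\con{\omega}s_j$ with $i\neq j$ would force $s_i\con{\omega}t_j$, contradicting the defining condition of $\mathfrak{P}_{\origin,\sink}$. So the distinct pairs $\{s_i,t_i\}$ lie in distinct components of $\omega$, and since the parallel coloring makes $s_i$ and $t_i$ share a color, every node-component is monochromatic, i.e. $\red\ncon{\omega}\blue$. The reverse inclusion is where planarity enters and is the main obstacle. Given $\omega\in\mathfrak{F}_{\nod}$ with $\red\ncon{\omega}\blue$, the induced partition $\pi(\omega)$ of $\nod$ is even, planar, and compatible with $\blue$, and I would show that for the parallel coloring such a partition is \emph{unique}, equal to the nested pairing $\{\{s_i,t_i\}\}_{i=1}^n$; this gives $s_i\con{\omega}t_i$ and $s_i\ncon{\omega}t_j$ at once, so $\omega\in\mathfrak{P}_{\origin,\sink}$. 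Uniqueness I would prove by induction on $n$, peeling off the innermost pair $\{s_n,t_n\}$, which occupies two cyclically adjacent positions and is monochromatic. The key point is that the colors of the nested pairs alternate, so the nodes strictly enclosed between $s_n$ and the next element of its block would form an arc with strictly alternating colors, and such an arc admits no partition into monochromatic, even, noncrossing blocks unless it is empty. Thus $s_n$ and $t_n$ are matched directly; a short crossing check (the alternation of colors rules out enclosing any differently indexed pair) shows their block is exactly $\{s_n,t_n\}$, and deleting it leaves the parallel coloring on $2(n-1)$ nodes, to which the inductive hypothesis applies. I expect this uniqueness statement to be the only genuinely delicate step.

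With the topological identity in hand I would conclude by bookkeeping around Proposition~\ref{lem:RB}. On $\mathfrak{P}_{\origin,\sink}$ each $\{s_i,t_i\}$ is its own component meeting $\nod$, and the two copies $v^<,v^>$ of any doubled vertex automatically lie in one component (they are the same vertex of $G$); hence exactly $n$ components of $\omega$ meet $A\cup B$, so $k_{A\cup B}(\omega)=n$. Substituting $\mathbf 1\{\red\ncon{\omega}\blue\}=\mathbf 1\{\omega\in\mathfrak{P}_{\origin,\sink}\}$ and $2^{-k_{A\cup B}(\omega)}=2^{-n}$ into the representation of Proposition~\ref{lem:RB} turns the correlation into $\tfrac1{Z_{\emptyset}2^n}\sum_{\n=(\omega,\eta)}w_{\at}(\n)\mathbf 1\{\omega\in\mathfrak{P}_{\origin,\sink}\}$, where $\n$ ranges over currents with $\delta(\eta)=A_1\cup B_1$. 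Finally I would note that projecting $\nod$ onto $V$ with multiplicities taken mod $2$ sends the doubled vertices to nothing and the remaining nodes to $A_1\cup B_1$, so under the stated projection convention $\Omega_{\nod}=\Omega_{A_1\cup B_1}$; this rewrites the sum over $\Omega_{\nod}$ and yields the claimed formula.
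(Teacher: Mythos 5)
Your overall route is exactly the one the paper intends: the paper gives no explicit proof of Corollary~\ref{cor:tot}, which is meant to follow by combining the expansion~\eqref{eq:omg} from Corollary~\ref{cor:pf} (whose first equality is valid for any coloring) with Proposition~\ref{lem:RB} and the asserted identity $\mathfrak P_{\origin,\sink}=\{\omega\in\mathfrak F_{\nod}:\red\ncon{\omega}\blue\}$. Your first equality, the easy inclusion $\mathfrak P_{\origin,\sink}\subseteq\{\omega\in\mathfrak F_{\nod}:\red\ncon{\omega}\blue\}$, the computation $k_{A\cup B}(\omega)=n$ on the event $\mathfrak P_{\origin,\sink}$, and the identification $\Omega_{\nod}=\Omega_{A_1\cup B_1}$ under the projection convention are all correct, and uniqueness of the even planar compatible partition is indeed the right content of the reverse inclusion.

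There is, however, a hole in your uniqueness induction. Around the circle the colors read $c_1,\ldots,c_n,c_n,\ldots,c_1$ with $c_i$ alternating, so there are \emph{two} same-color adjacencies: $(s_n,t_n)$ and, cyclically, $(t_1,s_1)$. If the cyclic successor of $s_n$ in its block is some $s_j$ reached by wrapping past $(t_1,s_1)$, the enclosed arc $t_n,\ldots,t_1,s_1,\ldots,s_{j-1}$ is \emph{not} strictly alternating, so your alternating-arc lemma does not apply; here a parity count rescues you, since the arc has $n+j-1$ elements with $j\equiv n\pmod 2$, an odd number, while a disjoint union of even blocks has even cardinality. Worse, in the step ``their block is exactly $\{s_n,t_n\}$'', a wrap-around successor $s_j$ of $t_n$ encloses an arc of \emph{even} length $n+j-2$, so neither strict alternation nor parity applies directly, and the advertised ``short crossing check'' is doing real work: you need one further observation, e.g.\ that $s_j$ and the next block element among the sources (be it $s_n$ or an intermediate $s_k$ with $k\equiv n$) enclose the strictly alternating arc $s_{j+1},\ldots$ of odd length, a contradiction. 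All cases can be closed in this spirit, but a shorter, gap-free argument simply reuses the first paragraph of the paper's proof of Lemma~\ref{lem:bal}: in any even planar partition compatible with $\blue$, the nodes of each block alternate between sources and sinks as one goes around the outer face; since in the parallel coloring all sources precede all sinks in the circular order, a block with two sources (or two sinks) would have two of them cyclically consecutive within the block, violating alternation. Hence every block is a pair $\{s_i,t_{\sigma(i)}\}$, and noncrossing forces $\sigma=\mathrm{id}$ because the intervals $[s_i,t_{\sigma(i)}]$ pairwise overlap and must therefore be nested, i.e.\ $\sigma$ is increasing. This yields uniqueness, and with it the reverse inclusion, in a few lines.
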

In particular, if the weights of currents are nonnegative, then so is the above signed sum of correlations of $\sigma$ and $\tilde \sigma$. We will later prove that in the noninteracting case $U= 0$, these inequalities are equivalent to total positivity of certain matrices of boundary two-point functions.

\section{Pfaffians and determinants} \label{sec:pfdet}
In this section we will show that for two independent Ising models, the multi-point correlations of $\sigma$ and $\tilde \sigma$ are Pfaffians and those of $\varphi$ and $\tilde \varphi$ are determinants of the two-point functions.

To this end, we will consider matrices indexed by the nodes $\nod$ with the order defined in the previous section.
Let~$K$ be the $\nod\times\nod$ square antisymmetric matrix given 
by 
\[
K_{u,v} =  \mathbf 1\{u\neq v \}(-1)^{\mathbf 1\{v> u \}} \langle \sigma_u\sigma_v\rangle, \qquad u,v\in\nod.
\]
Again, when evaluating the correlations we project the nodes $\nod$ onto the underlying vertices. In particular, if a vertex $v$ is doubled, then
$\sigma_{v}=\sigma_{v^<}=\sigma_{v^>}$.
For every $S\in \pev(\nod)$, we define ${K}^{S}$ to be the restriction of $K$ to the rows and columns indexed by $S$.
Recall that the Pfaffian of ${K}^{S}$ is the square root of its determinant. It is well known that it can be 
written as
\begin{align} \label{eq:pfcr}
\pf ({K}^{S}) = \sum_{\pi \in \Pi( 
S)} (-1)^{\textnormal{xg} (\pi)}  \prod_{uv \in \pi}\langle \sigma_u\sigma_v \rangle,
\end{align}
where $\Pi( S)$ is the set of all pairings of $S$, i.e., partitions of $S$ into sets of size two.
To define the sign $(-1)^{\textnormal{xg} (\pi)}$, one can think of a diagrammatic 
representation of~$\pi$, where the points representing $\nod$ are placed in the counterclockwise order on the boundary of a disk and straight line segments connect the points inside the disk according to~$\pi$ (see Fig.~\ref{fig:color}). Then $\textnormal{xg}(\pi)$ is the number of pairs in $\pi$ for which the corresponding segments cross. 

We now turn our attention to determinants. Let $\origin \cup \sink$ be a partition of $\nod$ into sources and sinks with $|\origin|=|\sink|$,
and let $\Pi( \origin, \sink)\subseteq\Pi(\nod)$ be the set of pairings in which each pair contains one source and one sink. Note that $\Pi(\origin, \sink )$ can be identified with the set of bijections from $\origin$ to $\sink$. 
In analogy with Postnikov's boundary measurement matrices~\cite{postnikov}, we define the $\origin\times \sink$ square matrix $K^{\origin,\sink}$ by
\begin{align}\label{eq:defK}
K^{\origin,\sink}_{u,v}= (-1)^{s(u,v)} \langle \sigma_u\sigma_v\rangle  \qquad{u\in \origin, v\in \sink},
\end{align}
where $s(u,v)$ is the number of sources strictly between (the smaller and the larger vertex) $u$ and $v$ in the fixed order on $\nod$.
 \begin{example}
Consider the correlator $\langle \varphi^2_{v_1}\varphi_{v_2} \tilde\varphi^2_{v_3} \varphi_{v_4} \rangle$ as in Fig.~\ref{fig:color}. We have $\origin=\{v_1^>,v_3^>,v_4\}$, $\sink= \{v_1^<,v_2,v_3^<\}$ and
\begin{align*}
K^{\origin,\sink}= \begin{pmatrix}
 1 & \langle \sigma_{v_1} \sigma_{v_2} \rangle & \langle \sigma_{v_1} \sigma_{v_3} \rangle \\
- \langle \sigma_{v_3} \sigma_{v_1}\rangle & \langle \sigma_{v_3} \sigma_{v_2} \rangle & 1 \\
 \langle \sigma_{v_4} \sigma_{v_1} \rangle &- \langle \sigma_{v_4} \sigma_{v_2} \rangle &-\langle \sigma_{v_4} \sigma_{v_3} \rangle\\
\end{pmatrix}.
\end{align*}
The $1$'s appearing in the matrix come from the fact that $\langle\sigma_v^{<}\sigma_v^{>}\rangle=\langle \sigma_v^2\rangle=1$.
\end{example}

An analogous formula to \eqref{eq:pfcr} is valid also for determinants as was shown in~\cite{postnikov}. Namely, we have
 \begin{align}\label{eq:det}
 \det (K^{\origin,\sink}) = \sum_{\pi \in \Pi(\origin,\sink)} (-1)^{\textnormal{xg}(\pi)}  \prod_{uv \in \pi}\langle \sigma_u\sigma_v\rangle.
 \end{align}

In order to state the next result we first need to account for some signs.
This is a well known observation and we give a proof here for the sake of completeness (for an alternative proof see e.g.~\cite[Lemma 8.10]{GaPy}).
\begin{lemma}\label{lem:signs}
Let $S\in \pev(\nod)$, $\pi\in \Pi(S)$ and $\pi'\in \Pi(\nod \setminus S)$. Then
\[
 (-1)^{\textnormal{xg}(\pi\cup\pi')} =(-1)^{\textnormal{xg}(\pi)+\textnormal{xg}(\pi') -|S|/2-|S\cap \nod_e|}.
\]
\end{lemma}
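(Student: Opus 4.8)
The plan is to realize the combined pairing $\pi\cup\pi'$ as a chord diagram in the disk and split the total crossing count according to which pairing each chord belongs to. Every crossing of $\pi\cup\pi'$ is of exactly one of three types: both chords in $\pi$, both in $\pi'$, or one in each. This gives $\textnormal{xg}(\pi\cup\pi')=\textnormal{xg}(\pi)+\textnormal{xg}(\pi')+X$, where $X$ is the number of \emph{mixed} crossings (one chord from each pairing). Since $-1\equiv 1\pmod 2$, the sign of the exponent on the right-hand side is irrelevant, so the lemma reduces to proving $X\equiv |S|/2+|S\cap\nod_e|\pmod 2$. The real content is to compute the parity of $X$ and, crucially, to see that it depends only on $S$ and not on the individual pairings $\pi,\pi'$.

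First I would fix a single chord $(a,b)\in\pi$ with $a<b$ and count the $\pi'$-chords crossing it. A $\pi'$-chord crosses $(a,b)$ exactly when precisely one of its two endpoints lies strictly between $a$ and $b$. The $m:=|\{p\in\nod\setminus S: a<p<b\}|$ nodes in this open arc are matched among themselves by $\pi'$; those matched to a node outside the arc each contribute one crossing, while those matched inside contribute none. Hence the number of $\pi'$-chords crossing $(a,b)$ is congruent to $m$ modulo $2$. Summing over all chords of $\pi$ and exchanging the order of summation yields
\[
X\equiv \sum_{p\in\nod\setminus S} d_\pi(p)\pmod 2,
\]
where $d_\pi(p)$ is the number of $\pi$-chords $(a,b)$ with $a<p<b$.

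Next I would remove the dependence on the pairings. A $\pi$-chord straddles $p$ precisely when it has one endpoint below $p$ and one above, and the number of $S$-nodes below $p$ equals twice the number of $\pi$-chords lying entirely below $p$ plus the number of straddling ones; therefore $d_\pi(p)\equiv|\{s\in S: s<p\}|\pmod 2$, independently of $\pi$. Consequently $X$ is congruent modulo $2$ to $N:=|\{(s,p): s\in S,\ p\in\nod\setminus S,\ s<p\}|$, a quantity determined by $S$ alone. This is the step I expect to be the main obstacle, since it is where the \emph{a priori} pairing-dependent count collapses to a purely set-theoretic one; everything afterward is bookkeeping.

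Finally I would evaluate $N$ directly. Identifying each node with its index in $\{1,\dots,2n\}$ and writing $S=\{s_1<\cdots<s_k\}$ with $k=|S|$ even, counting for each $s_i$ the nodes of $\nod\setminus S$ lying above it gives $N=2nk-\sum_i s_i-k^2+\tfrac{k(k+1)}2$. Reducing modulo $2$ and using that $k$ is even, this simplifies to $N\equiv \sum_{s\in S}s+|S|/2\pmod 2$. Since $\sum_{s\in S}s\equiv|S\cap\nod_o|\equiv|S\cap\nod_e|\pmod 2$, where the last congruence uses that $|S\cap\nod_o|+|S\cap\nod_e|=|S|$ is even, we obtain $X\equiv N\equiv |S|/2+|S\cap\nod_e|\pmod 2$, which is exactly the claimed identity.
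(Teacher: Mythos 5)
Your proof is correct, and it takes a genuinely different route from the paper's. Both arguments begin identically, splitting $\textnormal{xg}(\pi\cup\pi')-\textnormal{xg}(\pi)-\textnormal{xg}(\pi')$ into the count $X$ of mixed crossings; but the paper then proves that the parity of $X$ depends only on $S$ via an invariance argument (replacing two pairs $uv,u'v'\in\pi$ by $uu',vv'$ or $uv',u'v$ preserves the parity, a claim left as a small case check), reduces to canonical noncrossing pairings of consecutive nodes, verifies the formula when $S$ is an interval, and finishes by induction on moving one node of $S$ to a neighboring position. You instead compute the parity of $X$ directly by double counting: the chord-by-chord reduction to $X\equiv\sum_{p\in\nod\setminus S}d_\pi(p)\pmod 2$, the observation $d_\pi(p)\equiv|\{s\in S:s<p\}|\pmod 2$ (which is where pairing-independence falls out automatically rather than being a separate invariance claim), and the closed-form evaluation of $N=|\{(s,p):s\in S,\,p\in\nod\setminus S,\,s<p\}|$ together with the congruences $\sum_{s\in S}s\equiv|S\cap\nod_o|\equiv|S\cap\nod_e|\pmod 2$ are all correct (I checked the arithmetic, including $N=2nk-\sum_i s_i-\frac{k(k-1)}{2}$ and the final reduction). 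What your approach buys is a fully explicit, self-contained computation with no cases left to the reader and no normal-form reduction; what the paper's buys is brevity and a reusable structural fact (swap-invariance of the mixed-crossing parity) at the cost of two verification steps it does not spell out. One small point of care that you handled correctly but implicitly: the equivalence between geometric chord crossings in the disk and the linear interleaving condition $a<c<b<d$ (so that a $\pi'$-chord with one endpoint below $a$ and one above $b$ does \emph{not} cross $(a,b)$) is what makes the ``exactly one endpoint in the open arc'' criterion valid; it would be worth one sentence in a final write-up.
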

\begin{proof}
We first note that $(-1)^{\textnormal{xg}(\pi\cup\pi')-\textnormal{xg}(\pi)-\textnormal{xg}(\pi')} = (-1)^{\textnormal{xg}(\pi,\pi')}$ where we define $\textnormal{xg}(\pi,\pi')$ to be the number of pairs $uv\in \pi$ and $u'v'\in \pi'$ 
that cross. 
We claim that the right-hand side depends only on $S$ and not on $\pi$ and $\pi'$.
This is true since replacing two pairs $uv,u'v'\in \pi$ by $uu',vv'$ or $uv',u'v$ does not change the parity of $\textnormal{xg}(\pi,\pi')$ (this can be checked by considering a small number of cases).
We can hence assume that $\pi$ and $\pi'$ have no crossings and match consecutive vertices in $S$ and $\nod \setminus S$ respectively.
Now we notice that the statement is clearly true if $|S|$ is a set of consecutive nodes since then $|S|/2+|S\cap \nod_e|$ is even, and no pair from $\pi$ crosses a pair from $\pi'$.
It is therefore enough to check that both sides of the equality change in the same way when a node in $S$ is replaced by a neighboring node.
Indeed, this transformation changes the parity of both $\textnormal{xg}(\pi,\pi')$ and $|S\cap \nod_e|$.
\end{proof}

The next identity expresses the determinant of $K^{\origin,\sink}$ as a signed linear combination of products of Pfaffians of $K^S$.
It is likely that this formula is known to experts. However, we were not able to find a suitable reference, and we present its proof as it bears a
strong resemblance to the proof of the second switching lemma in Proposition~\ref{lem:RB}.
\begin{proposition} \label{lem:sousin}
Let $\blue$ be a coloring of $\nod$. Then
\[
\sum_{S\in\pev( \nod)} (-1)^{|S\cap \blue|} \pf(K^{S}) \pf (K^{\nod \setminus S})= \begin{cases}2^{|\nod|/2}\det (K^{\origin,\sink} )& \text{if } |\origin|=|\sink|, \\
0 &  \text{otherwise},
\end{cases}
\]
where $\origin$ and $\sink$ are the sources and sinks associated with $\blue$.
\end{proposition}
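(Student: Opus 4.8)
The plan is to expand both Pfaffians in the left-hand side using the crossing formula~\eqref{eq:pfcr} and then recognize the result as $2^{|\nod|/2}$ times the determinantal expansion~\eqref{eq:det}. The key structural observation is that a triple $(S,\pi,\pi')$ with $S\in\pev(\nod)$, $\pi\in\Pi(S)$ and $\pi'\in\Pi(\nod\setminus S)$ carries exactly the same data as a single pairing $\rho=\pi\cup\pi'$ of all of $\nod$ together with a subset $T=\pi$ of its pairs, where $S=\bigcup_{uv\in T}\{u,v\}$ and hence $|S|=2|T|$ is automatically even. Under this reparametrization the double sum defining the left-hand side becomes a sum over $\rho\in\Pi(\nod)$ and $T\subseteq\rho$.

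First I would apply Lemma~\ref{lem:signs} to rewrite $(-1)^{\textnormal{xg}(\pi)+\textnormal{xg}(\pi')}$ as $(-1)^{\textnormal{xg}(\rho)+|S|/2+|S\cap\nod_e|}$, so that the total sign attached to $(\rho,T)$ splits as $(-1)^{\textnormal{xg}(\rho)}$ times a product over the pairs in $T$ of local factors $(-1)^{|\{u,v\}\cap\blue|+|\{u,v\}\cap\nod_e|+1}$. Crucially, since $T$ ranges over all subsets of the pairs of $\rho$, the inner sum over $T$ then factorizes,
\[
\sum_{T\subseteq\rho}\,\prod_{uv\in T}(-1)^{|\{u,v\}\cap\blue|+|\{u,v\}\cap\nod_e|+1}=\prod_{uv\in\rho}\Big(1+(-1)^{|\{u,v\}\cap\blue|+|\{u,v\}\cap\nod_e|+1}\Big),
\]
reducing the whole question to understanding a single local factor attached to each pair $uv$ of $\rho$.

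The heart of the argument, and the step requiring the most care, is the identification of this local factor. I would introduce $f(w)\equiv|\{w\}\cap\blue|+|\{w\}\cap\nod_e|\pmod 2$ and check, directly from the defining rule $\origin=(\nod_o\cap\blue)\cup(\nod_e\cap\red)$ and $\sink=(\nod_e\cap\blue)\cup(\nod_o\cap\red)$, that $f(w)=1$ precisely when $w\in\origin$ and $f(w)=0$ precisely when $w\in\sink$. Consequently the exponent satisfies $|\{u,v\}\cap\blue|+|\{u,v\}\cap\nod_e|\equiv f(u)+f(v)\pmod 2$, which is odd exactly when the pair $uv$ joins one source to one sink. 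Thus each local factor equals $2$ when $uv$ is a source--sink pair and $0$ otherwise, so the entire product vanishes unless every pair of $\rho$ is a source--sink pair, that is, unless $\rho\in\Pi(\origin,\sink)$, in which case the product equals $2^{|\nod|/2}$.

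Finally I would assemble the pieces. If $|\origin|\neq|\sink|$ there is no perfect matching of $\nod$ into source--sink pairs, so every product vanishes and the sum is $0$. If $|\origin|=|\sink|$, the surviving terms are indexed by $\rho\in\Pi(\origin,\sink)$, each contributing $2^{|\nod|/2}(-1)^{\textnormal{xg}(\rho)}\prod_{uv\in\rho}\langle\sigma_u\sigma_v\rangle$; summing over such $\rho$ and invoking~\eqref{eq:det} yields exactly $2^{|\nod|/2}\det(K^{\origin,\sink})$. The main obstacle is purely the sign bookkeeping: one must combine the reparametrization, Lemma~\ref{lem:signs}, and the source/sink characterization of $f$ consistently. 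This mirrors the sign-reversing cancellation already seen in the proof of Proposition~\ref{lem:RB}, where here the vanishing of a local factor on a non-source--sink pair plays the role of the sign-reversing involution.
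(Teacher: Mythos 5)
Your proof is correct and follows essentially the same route as the paper: both expand the Pfaffians via~\eqref{eq:pfcr}, apply Lemma~\ref{lem:signs}, and exchange the order of summation so that one sums over full pairings of $\nod$ and, for each pairing, over the choice of which pairs lie in $S$ (your $(\rho,T)$ reparametrization is exactly the paper's notion of $S$ being compatible with $\pi$). The only cosmetic difference is that you evaluate the inner sum by factorizing it into local factors $1+(-1)^{f(u)+f(v)+1}$ per pair, whereas the paper reaches the identical conclusion $2^{|\nod|/2}\mathbf 1\{\pi\in\Pi(\origin,\sink)\}$ by noting the sign is constant on source--sink pairings and exhibiting the sign-reversing involution $S\mapsto S\triangle\{u,v\}$ otherwise.
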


\begin{proof} 
By definition of $\origin$ we have
\[
(-1)^{|S\cap \blue|} =(-1)^{ |S\cap \nod_o\cap \blue| + |S\cap\nod_e\cap \red| -|S\cap \nod_e|} = (-1)^{|S\cap \origin|-|S\cap \nod_e|}.
\]
Hence, by \eqref{eq:pfcr} and Lemma~\ref{lem:signs} we can write
\begin{align*}
&\sum_{S\in\pev( \nod)} (-1)^{|S\cap \origin|-|S\cap \nod_e|} \pf ({K}^{S})\pf ({K}^{\nod\setminus  S}) \\
& =  \sum_{S\in\pev( \nod)}(-1)^{|S|/2+|S\cap\origin|}\mathop{\sum_{\pi \in \Pi(S)}}_{\pi'\in \Pi(\nod \setminus S)}(-1)^{\textnormal{xg} (\pi)+\textnormal{xg} (\pi')-|S|/2-|S\cap \nod_e|}\prod_{uv\in \pi\cup \pi'}
\langle \sigma_u\sigma_v\rangle \\
& =  \sum_{S\in\pev( \nod)} (-1)^{|S|/2+|S\cap\origin|}\mathop{\sum_{\pi \in \Pi(\nod)}}_{\pi \textnormal{ comp.\ with } S}(-1)^{\textnormal{xg} (\pi)}\prod_{uv\in \pi}\langle \sigma_u\sigma_v\rangle\\
&= {\sum_{\pi \in \Pi(\nod)}}   (-1)^{\textnormal{xg} (\pi)}\prod_{uv\in \pi}\langle \sigma_u\sigma_v\rangle \Big(\mathop{\sum_{S\in\pev( \nod)}}_{\pi \textnormal{ comp.\ with } S}  
(-1)^{|S|/2+|S\cap\origin|}\Big),
\end{align*}
where we say that $\pi$ is compatible with $S$ if $\pi$ \emph{does not} match a vertex from~$S$ with a vertex from $\nod \setminus S$. To finish the proof it is enough to use~\eqref{eq:det},
and show that
\[
\mathop{\sum_{S\in\pev( \nod)}}_{S \textnormal{ comp.\ with } \pi}  (-1)^{|S|/2+|S\cap\origin|}= 2^{|\nod|/2}\mathbf{1}\{\pi\in \Pi(\origin, \sink) \}.
\]
Indeed, selecting a compatible set $S$ for $\pi$ is equivalent to deciding for each pair in $\pi$ if it is contained in $S$ or not. The total number of such choices is therefore $2^{|\nod|/2}$. Now if $\pi\in \Pi( \origin, \sink)$, then each pair in $\pi$ contains one source and one sink, and then the sign in the sum is constant and equal one. On the other hand if there is a pair $\{u,v\}\in \pi$ such that either 
$\{u,v\}\subseteq \origin$ or $\{u,v\}\subseteq\sink$, then $S\mapsto S\triangle\{u,v\}$ is a sign reversing involution on sets $S$ of even cardinality that are compatible with~$\pi$.
\end{proof}

\begin{corollary} \label{cor:recPf}
In direct analogy with~\eqref{eq:pf}, if $\blue$ is unbalanced and $|\blue|$ is even, then
\begin{align*} 
2\pf (K^{\nod}) = \sum_{S\in\pev( \nod)\setminus \{ \emptyset, \nod \}} (-1)^{|S\cap \blue|+1}\pf (K^S) \pf(K^{\nod \setminus S}).
\end{align*}
\end{corollary}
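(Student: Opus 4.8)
The plan is to specialize Proposition~\ref{lem:sousin} to the unbalanced case and then peel off the two extreme terms of the sum, exactly mirroring the passage from \eqref{eq:omg} to \eqref{eq:pf} in the proof of Corollary~\ref{cor:pf}.

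First I would invoke Proposition~\ref{lem:sousin}. Since $\blue$ is unbalanced, the associated sources and sinks satisfy $|\origin|\neq|\sink|$, so the right-hand side of that identity vanishes and we obtain
\[
\sum_{S\in\pev(\nod)} (-1)^{|S\cap\blue|}\pf(K^{S})\pf(K^{\nod\setminus S})=0.
\]
Next I would isolate the contributions of $S=\emptyset$ and $S=\nod$ from this sum; both lie in $\pev(\nod)$, the latter because $|\nod|$ is assumed even. For $S=\emptyset$ the sign $(-1)^{|\emptyset\cap\blue|}$ equals $1$ and $\pf(K^{\emptyset})=1$ by convention, so this term contributes $\pf(K^{\nod})$. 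For $S=\nod$ I would use the hypothesis that $|\blue|$ is even: then $(-1)^{|\nod\cap\blue|}=(-1)^{|\blue|}=1$, and again $\pf(K^{\emptyset})=1$, so this term also contributes $\pf(K^{\nod})$. Subtracting these two equal terms from the vanishing sum and absorbing the overall minus sign into the exponent $|S\cap\blue|+1$ yields the stated recurrence
\[
2\pf(K^{\nod})=\sum_{S\in\pev(\nod)\setminus\{\emptyset,\nod\}} (-1)^{|S\cap\blue|+1}\pf(K^{S})\pf(K^{\nod\setminus S}).
\]

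There is no genuine obstacle here; the argument is pure bookkeeping of signs and of the two boundary terms. The only points that require care are verifying that the $S=\nod$ term carries a plus sign—which is precisely where the parity assumption on $|\blue|$ enters—and that both extreme terms reduce to $\pf(K^{\nod})$ via the convention $\pf(K^{\emptyset})=1$, so that they combine to give the factor $2$ on the left. This is the Pfaffian counterpart of the correlation identity \eqref{eq:pf}, with Proposition~\ref{lem:sousin} playing the role that the switching lemma and Corollary~\ref{cor:pf} played there.
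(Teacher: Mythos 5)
Your proof is correct and takes essentially the same route as the paper, which states Corollary~\ref{cor:recPf} as an immediate consequence of Proposition~\ref{lem:sousin}: since an unbalanced coloring has $|\origin|\neq|\sink|$ by definition, the alternating sum of Pfaffian products vanishes, and peeling off the two boundary terms $S=\emptyset$ and $S=\nod$ (each contributing $\pf(K^{\nod})$, the latter with sign $+1$ precisely because $|\blue|$ is even) yields the recurrence, exactly mirroring the passage from \eqref{eq:omg} to \eqref{eq:pf}. Your bookkeeping of the signs and of the convention $\pf(K^{\emptyset})=1$ is accurate, so there is nothing to correct.
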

We can readily rederive the classical result of Groeneveld, Boel and Kasteleyn.
\begin{corollary}[Pfaffian formula for boundary correlations~\cite{GBK}] \label{thm:pf}
Let $J$ be arbitrary and $U= 0$. Then for all $S\in \pev (\nod)$,
\[
\langle \sigma_{S}\rangle = \pf (K^{S}).
\]
\end{corollary}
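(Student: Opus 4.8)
The plan is to prove $\langle \sigma_S \rangle = \pf(K^S)$ by induction on $|S|$, exploiting the fact—already visible from the parallel statements of Corollary~\ref{cor:pf} and Corollary~\ref{cor:recPf}—that in the noninteracting case $U=0$ the correlations $\langle\sigma_\cdot\rangle$ and the Pfaffians $\pf(K^\cdot)$ obey one and the same recurrence. The crucial structural point is that both sides of the claimed identity are intrinsic to $S$ and its induced cyclic boundary order: the matrix $K^S$ depends only on the relative order of the nodes via \eqref{eq:pfcr}, and $\langle\sigma_S\rangle$ depends only on $S$. Hence I am free to re-run the constructions of Sections~\ref{sec:planar} and~\ref{sec:pfdet} with the node set taken to be $S$ itself, relabeling its positions $1,\dots,|S|$; the recurrences of Corollary~\ref{cor:pf} and Corollary~\ref{cor:recPf} then both refer only to $S$.

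For the base cases I would check $|S|=0$, where both sides equal $1$ (the empty product and the empty Pfaffian), and $|S|=2$, say $S=\{u,v\}$ with $u<v$, where the unique pairing has no crossings, so \eqref{eq:pfcr} gives $\pf(K^S)=\langle\sigma_u\sigma_v\rangle=\langle\sigma_S\rangle$. These must be treated separately, since the recurrence splits $S$ into two nonempty even parts and hence requires $|S|\ge 4$.

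For the inductive step, fix $S$ with $|S|=2n\ge 4$ and assume the identity for all smaller even sets. Regarding $S$ as the node set, I would choose a coloring $\blue$ that is unbalanced with $|\blue|$ even: coloring two odd-indexed nodes blue (e.g.\ $v_1,v_3$) gives $|\nod_o\cap\blue|=2\neq 0=|\nod_e\cap\blue|$, which is unbalanced by the source/sink count behind Lemma~\ref{lem:bal}, while $|\blue|=2$ is even. Such a choice exists precisely because $n\ge 2$. With this $\blue$, Corollary~\ref{cor:pf}—using $U=0$ to factor $\langle\sigma_{S\setminus T}\tilde\sigma_T\rangle=\langle\sigma_{S\setminus T}\rangle\langle\sigma_T\rangle$—yields, via \eqref{eq:pf}, that $2\langle\sigma_S\rangle=\sum_T(-1)^{|T\cap\blue|+1}\langle\sigma_T\rangle\langle\sigma_{S\setminus T}\rangle$, the sum running over $T\in\pev(S)\setminus\{\emptyset,S\}$. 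Corollary~\ref{cor:recPf} provides the identical recurrence for $\pf(K^S)$ with the same coloring and the same index set. Since every $T$ appearing satisfies $|T|,|S\setminus T|<2n$, the induction hypothesis replaces $\langle\sigma_T\rangle\langle\sigma_{S\setminus T}\rangle$ by $\pf(K^T)\pf(K^{S\setminus T})$ term by term, the two right-hand sides coincide, and $2\langle\sigma_S\rangle=2\pf(K^S)$ follows.

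The only genuinely delicate points are (i) exhibiting, for every even $S$ of size at least four, an unbalanced coloring with an even number of blue nodes, so that \emph{both} recurrences are simultaneously applicable and carry the same signs $(-1)^{|T\cap\blue|+1}$; and (ii) the bookkeeping that both recurrences range over the same index set $\pev(S)\setminus\{\emptyset,S\}$ with matching coefficients, which lets the induction hypothesis be applied termwise. Everything else is routine, and no topology or combinatorics beyond Lemma~\ref{lem:bal}, Corollary~\ref{cor:pf}, and Corollary~\ref{cor:recPf} is required.
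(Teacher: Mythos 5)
Your proof is correct and is essentially the paper's own argument: induction on the even cardinality of $S$, with the inductive step resting on the choice of an unbalanced coloring $\blue$ of $S$ with $|\blue|$ even, so that the recurrence \eqref{eq:pf} from Corollary~\ref{cor:pf} (factorized using $U=0$) and the Pfaffian recurrence of Corollary~\ref{cor:recPf} coincide term by term over $\pev(S)\setminus\{\emptyset,S\}$ and the hypothesis applies to each factor. Your additional details---the explicit unbalanced coloring $\{v_1,v_3\}$, the base cases, and the observation that $K^S$ and $\langle\sigma_S\rangle$ depend only on the induced order on $S$ (which the paper handles by citing Lemma~\ref{lem:signs} and ``with $T$ in place of $\nod$'')---merely make explicit what the paper leaves implicit.
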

\begin{proof}
We argue by induction on the (even) cardinality of $S$. 
The case of $|S|=2$ is obvious.
We can hence assume that the statement holds true for every $S\in \pev(\nod)$ with $|S|<2k$. 
Then for any $T\in \pev(\nod)$ with $|T|=2k$, we can use the recursion relation~\eqref{eq:pf} (with $T$ in place of $\nod$)
where we take an unbalanced coloring $\blue$ of $T$ with $|\blue|$ even (there always exists one). By Lemma~\ref{lem:signs} the relation~\eqref{eq:pf} is the same as the
one for Pfaffians from Corollary~\ref{cor:recPf}, and the right-hand side involves only sets $S$ with $|S|<2k$. This concludes the proof.
\end{proof}

The following determinantal formula for the correlations of $\varphi$ and $\tilde \varphi$ is one of the main new contributions of this article.
\begin{theorem} \label{thm:varphi}
Let $J$ be arbitrary and $U= 0$.Then
\begin{align*}
  \langle \varphi_{A_1} \varphi^2_{A_2} \tilde \varphi_{B_1} \tilde \varphi^2_{B_2} \rangle =\begin{cases}2^{-|\nod|/2}\det (K^{\origin,\sink} )& \text{if } |\origin|=|\sink|, \\
0 &  \text{otherwise},
\end{cases}
\end{align*}
where $K^{\origin,\sink}$ is defined in \eqref{eq:defK}.
In particular the above determinant is nonnegative for coupling constants as in~\eqref{eq:nonnegative}.
\end{theorem}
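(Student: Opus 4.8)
The plan is to assemble three facts that are already in place: the current expansion of the $\varphi$--$\tilde\varphi$ correlation into a signed sum of $\sigma$--$\tilde\sigma$ correlations, the factorization and Pfaffian formula available at $U=0$, and the Pfaffian-to-determinant identity of Proposition~\ref{lem:sousin}. The whole argument is then a matter of substitution and bookkeeping of signs and powers of two.

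First I would start from the expansion established in the proof of Proposition~\ref{lem:RB} (rewritten on nodes as in the first equality of \eqref{eq:omg} and of Corollary~\ref{cor:tot}), namely
\[
\langle \varphi_{A_1} \varphi^2_{A_2} \tilde \varphi_{B_1} \tilde \varphi^2_{B_2}\rangle =\frac{1}{2^{|\nod|}}\sum_{S \in\pev( \nod)} (-1)^{|S\cap \blue|} \langle\sigma_{\nod \setminus S} \tilde \sigma_S \rangle,
\]
where $\blue$ is the coloring induced by $A$ and $B$ and the nodes are projected onto the underlying vertices. This identity is a direct consequence of the definitions of $\varphi,\tilde\varphi$ and is valid for every coloring, whether or not it is balanced.

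Next I would specialize to $U=0$, where $\sigma$ and $\tilde\sigma$ are independent and identically distributed. Each summand then factorizes as $\langle\sigma_{\nod\setminus S}\tilde\sigma_S\rangle=\langle\sigma_{\nod\setminus S}\rangle\langle\sigma_S\rangle$, and Corollary~\ref{thm:pf} turns both factors into Pfaffians, giving $\langle\sigma_{\nod\setminus S}\rangle\langle\sigma_S\rangle=\pf(K^{\nod\setminus S})\pf(K^S)$. Substituting, the correlation becomes
\[
\langle \varphi_{A_1} \varphi^2_{A_2} \tilde \varphi_{B_1} \tilde \varphi^2_{B_2}\rangle=\frac{1}{2^{|\nod|}}\sum_{S\in\pev(\nod)}(-1)^{|S\cap\blue|}\pf(K^S)\pf(K^{\nod\setminus S}).
\]
Applying Proposition~\ref{lem:sousin} to this exact sum yields $2^{-|\nod|}\cdot 2^{|\nod|/2}\det(K^{\origin,\sink})=2^{-|\nod|/2}\det(K^{\origin,\sink})$ when $|\origin|=|\sink|$ and $0$ otherwise. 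Observe that Proposition~\ref{lem:sousin} delivers both branches of the claimed formula simultaneously, so no separate treatment of balanced and unbalanced colorings is required.

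Finally, for the nonnegativity claim, I would note that when $U=0$ the hypothesis \eqref{eq:nonnegative} reduces to $J\geq 0$, and under this condition the left-hand side $\langle \varphi_{A_1} \varphi^2_{A_2} \tilde \varphi_{B_1} \tilde \varphi^2_{B_2}\rangle$ is nonnegative by the first Griffiths inequality recorded after Proposition~\ref{lem:RB} (there the current representation has manifestly nonnegative summands). Since this correlation equals $2^{-|\nod|/2}\det(K^{\origin,\sink})$, the determinant is nonnegative. I do not expect a serious obstacle; the only point requiring care is confirming that the opening expansion holds for an arbitrary (in particular balanced) coloring and is compatible with the node-to-vertex projection $\sigma_{v^<}=\sigma_{v^>}=\sigma_v$. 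Once that is in hand, the chain of substitutions into Proposition~\ref{lem:sousin} is immediate.
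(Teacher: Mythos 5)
Your proposal is correct, and its core is the same as the paper's: expand the $\varphi$--$\tilde\varphi$ correlator as the signed sum $2^{-|\nod|}\sum_{S\in\pev(\nod)}(-1)^{|S\cap\blue|}\langle\sigma_{\nod\setminus S}\rangle\langle\sigma_S\rangle$ at $U=0$, convert each factor to a Pfaffian via Corollary~\ref{thm:pf}, and contract the sum with Proposition~\ref{lem:sousin}. Where you genuinely diverge is the treatment of doubled nodes: the paper proves the identity directly only when $A_2\cup B_2=\emptyset$, and then handles squared insertions by appending auxiliary vertices $v^<,v^>$ with couplings $J_{vv^<}=J_{vv^>}=m$ and letting $m\to\infty$, so that $\varphi_v^2$ is replaced by $\varphi_{v^<}\varphi_{v^>}$ on a genuinely larger graph. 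You instead run the computation at the node level from the start, using that $\varphi_v^2=\varphi_{v^<}\varphi_{v^>}$ holds identically under the projection $\sigma_{v^<}=\sigma_{v^>}=\sigma_v$; this is legitimate because the node-level expansion is purely algebraic (the odd-$|S|$ terms drop by the $\tilde\sigma\to-\tilde\sigma$ symmetry, and parities of node sets and their vertex projections agree), and because Corollary~\ref{thm:pf} is stated for all $S\in\pev(\nod)$, including sets containing a doubled pair, where the entries $\langle\sigma_{v^<}\sigma_{v^>}\rangle=1$ appear. Your route buys a cleaner argument with no limiting step and no continuity discussion, at the price of leaning on the node-level form of Corollary~\ref{thm:pf}, which is exactly where the doubled-node bookkeeping is hidden; the paper's limit reduces everything to the simple-node case where no such care is needed. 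A second, minor difference: you obtain the vanishing branch $|\origin|\neq|\sink|$ from the algebraic identity in Proposition~\ref{lem:sousin}, while the paper cites Corollary~\ref{cor:pf}; both are valid here, though the paper's route shows the vanishing for all $U$, not just $U=0$ (irrelevant for this theorem). Your nonnegativity argument via the first Griffiths inequality recorded after Proposition~\ref{lem:RB}, with \eqref{eq:nonnegative} reducing to $J\geq0$ at $U=0$, matches the intended justification.
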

\begin{proof}
The case $|\origin| \neq |\sink|$ follows from Corollary~\ref{cor:pf}, and hence we assume that $|\origin| = |\sink|$.
We first prove the statement in the case with no doubled nodes, i.e., $A_2\cup B_2=\emptyset$. Then $B=\mathcal B$, and by Corollary~\ref{thm:pf} and Proposition~\ref{lem:sousin} we have
\begin{align*}
\langle \varphi_{A} \tilde\varphi_{B} \rangle&={2^{-|\nod|}} \sum_{S\in \pev(\nod)}(-1)^{|B\cap S|} \langle \sigma_S\rangle \langle \sigma_{\nod \setminus S}\rangle \\
&={2^{-|\nod|}}  \sum_{S\in \pev(\nod)}(-1)^{|\blue \cap S|} \pf (K^S) \pf (K^{\nod \setminus S} ) \\
&= {2^{-|\nod|/2}} \det (K^{\origin,\sink} ).
\end{align*}
Now assume that there are doubled nodes, i.e., $A_2\cup B_2\neq \emptyset$. In this case we append two additional vertices $v^<$ and $v^>$ to each doubled node~$v$, and set the 
coupling constants to $J_{vv^<}=J_{vv^>}=m$. For each doubled node~$v$, we then replace each term $\varphi^2_v$ by $\varphi_{v^<}\varphi_{v^>}$ in the correlators, and we proceed likewise for $\tilde \varphi^2_v$.
We then apply the result already proved for the case $|\origin| = |\sink|$, and take the limit $m\to \infty$ in which $\varphi_v=\varphi_{v^<}=\varphi_{v^>}$ almost surely. 
\end{proof}

We now turn our attention to total positivity of boundary two-point functions that was first described in~\cite{LisT}. 
Recall that a square matrix is \emph{totally positive} (resp.\ \emph{nonnegative}) if all its minors are positive (resp.\ {nonnegative}).
Let $ \origin$, $\sink$ be as in Corollary~\ref{cor:tot}. 
Define the $n\times n$ matrix
\begin{align*} 
M_{i,j} = \langle \sigma_{s_i} \sigma_{t_j} \rangle, \qquad 1\leq i,j\leq n.
\end{align*}
For $I,J\subseteq \{1,\ldots,n\}$ we denote by $M^{I,J}$ the restriction of $M$ to rows indexed by $I$ and columns indexed by $J$.

\begin{corollary}[Total positivity of boundary correlations~\cite{LisT}]
For arbitrary $J$ and $U= 0$, the matrix $M$ as defined above is totally nonnegative. Moreover, 
if $I,I'\subseteq \{1,\ldots,n\}$ with $|I|=|I'|=k$, then $\det M^{I,I'}>0$
if and only if there exist $k$ vertex-disjoint paths in $G$ that connect in pairs the sources from $\{s_i \}_{i\in I}$ with the sinks in $\{ t_j\}_{j\in I'}$.
\end{corollary}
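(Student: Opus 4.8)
The plan is to reduce both assertions to Corollary~\ref{cor:tot} and Theorem~\ref{thm:varphi} applied to a sub-configuration, after matching the signs of $M$ with those of the boundary measurement matrix. Fix $I=\{i_1<\dots<i_k\}$ and $I'=\{j_1<\dots<j_k\}$, and consider, on the same graph $G$, the node set $\nod_I=\{s_i:i\in I\}\cup\{t_j:j\in I'\}$ with sources $\{s_i\}_{i\in I}$ and sinks $\{t_j\}_{j\in I'}$. These $2k$ nodes inherit from $\nod$ the counterclockwise boundary order
\[
s_{i_1},\dots,s_{i_k},\,t_{j_k},t_{j_{k-1}},\dots,t_{j_1}
\]
so they again form a balanced parallel coloring as in Corollary~\ref{cor:tot}. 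Let $K'$ denote the matrix \eqref{eq:defK} for this sub-configuration and let $\mathfrak P'$ be its associated connection event, i.e.\ $\omega\in\mathfrak P'$ means $s_{i_\ell}\con{\omega}t_{j_\ell}$ for all $\ell$ and $s_{i_\ell}\ncon{\omega}t_{j_m}$ for $\ell\neq m$. Corollary~\ref{cor:tot} and Theorem~\ref{thm:varphi} express the same $\varphi,\tilde\varphi$ correlator as $2^{-k}\det(K')$ and as $\frac{1}{Z_{\emptyset}2^{k}}\sum_{\n\in\Omega_{\nod_I}}w_{\at}(\n)\mathbf 1\{\omega\in\mathfrak P'\}$, whence
\[
\det(K')=\frac{1}{Z_{\emptyset}}\sum_{\n=(\omega,\eta)\in\Omega_{\nod_I}}w_{\at}(\n)\,\mathbf 1\{\omega\in\mathfrak P'\}.
\]

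Next I would identify $\det(K')$ with $\det M^{I,I'}$ using the Postnikov expansion \eqref{eq:det}. A pairing $\pi\in\Pi(\{s_i\}_{i\in I},\{t_j\}_{j\in I'})$ is the same as a bijection $\rho$ with $\pi=\{s_{i_a}t_{j_{\rho(a)}}\}_a$, and in the parallel order above the two chords $s_{i_a}t_{j_{\rho(a)}}$ and $s_{i_b}t_{j_{\rho(b)}}$ with $a<b$ cross precisely when $\rho(a)>\rho(b)$. Hence $\textnormal{xg}(\pi)$ equals the number of inversions of $\rho$, so $(-1)^{\textnormal{xg}(\pi)}=\operatorname{sgn}(\rho)$, and \eqref{eq:det} becomes exactly the Leibniz expansion of $\det\big(\langle\sigma_{s_{i_a}}\sigma_{t_{j_b}}\rangle\big)_{a,b}=\det M^{I,I'}$. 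Combined with the previous display this yields the master formula
\[
\det M^{I,I'}=\frac{1}{Z_{\emptyset}}\sum_{\n=(\omega,\eta)\in\Omega_{\nod_I}}w_{\at}(\n)\,\mathbf 1\{\omega\in\mathfrak P'\}.
\]
Since $U=0$ and the couplings satisfy \eqref{eq:nonnegative}, every $w_{\at}(\n)$ is nonnegative by \eqref{eq:CD}, so $\det M^{I,I'}\ge 0$; as $I$ and $I'$ were arbitrary, $M$ is totally nonnegative.

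For the positivity criterion I would read off from the master formula that $\det M^{I,I'}>0$ if and only if some current $\n=(\omega,\eta)$ of positive weight has $\omega\in\mathfrak P'$; when the couplings are strictly positive every current has positive weight, so this is equivalent to $\mathfrak P'\neq\emptyset$, as any $\omega\in\mathfrak F_{\nod_I}$ extends to a current in $\Omega_{\nod_I}$. The conditions defining $\mathfrak P'$ force each pair $\{s_{i_\ell},t_{j_\ell}\}$ to form its own connected component of $(V,\omega)$: if $s_{i_\ell}$ and $s_{i_m}$ lay in one component then $s_{i_\ell}\con{\omega}t_{j_m}$, contradicting $s_{i_\ell}\ncon{\omega}t_{j_m}$. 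Thus, given $\omega\in\mathfrak P'$, a simple path from $s_{i_\ell}$ to $t_{j_\ell}$ inside each component produces $k$ paths that are vertex-disjoint because their components are. Conversely, given $k$ vertex-disjoint paths pairing $\{s_i\}_{i\in I}$ with $\{t_j\}_{j\in I'}$, their endpoints lie on the outer face and the paths are disjoint curves, so the induced pairing is non-crossing and, by the parallel boundary order, necessarily the diagonal one $s_{i_\ell}\leftrightarrow t_{j_\ell}$; taking $\omega$ to be the union of their edge sets and $\eta=\omega$ gives a current in $\Omega_{\nod_I}$ with $\omega\in\mathfrak P'$ of positive weight. This proves the equivalence.

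The main obstacle I anticipate is the sign bookkeeping that makes $\det M^{I,I'}$ sign-free, i.e.\ the identification of $\textnormal{xg}(\pi)$ with the inversion number of $\rho$ in the parallel order (without which $\det(K')$ and $\det M^{I,I'}$ would differ by the sign $(-1)^{k(k-1)/2}$), together with the topological fact that vertex-disjoint paths between boundary terminals realize a non-crossing, and here diagonal, pairing. The auxiliary points---that any $\omega\in\mathfrak F_{\nod_I}$ completes to a current, and that edges with vanishing coupling carry zero weight and may be discarded so that ``paths in $G$'' refers to the positive-coupling subgraph---are routine.
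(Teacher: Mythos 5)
Your proposal is correct and takes essentially the same route as the paper's proof: restrict to the sub-configuration $S_I=\{s_i\}_{i\in I}$, $T_{I'}=\{t_j\}_{j\in I'}$, match $\det M^{I,I'}$ with $\det K^{S_I,T_{I'}}$ by a sign check (you do it via inversion counting in \eqref{eq:det}, the paper via the remark that the entry signs compensate the reversed column order), combine Corollary~\ref{cor:tot} with Theorem~\ref{thm:varphi} to obtain the nonnegative current expansion of the minor, and realize vertex-disjoint paths as a current with $\omega=\eta\in\mathfrak P_{S_I,T_{I'}}$. Your explicit treatment of the converse direction (each pair $\{s_{i_\ell},t_{j_\ell}\}$ occupying its own cluster, hence disjoint paths) and of zero couplings merely spells out what the paper leaves implicit.
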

\begin{proof}
Let $S_I=\{s_i \}_{i\in I}$ and $T_{I'}=\{ t_j\}_{j\in I'}$. One can check that $\det M^{I,I'}=\det K^{S_I,T_{I'}}$ as the signs in the definition of $ K^{S_I,T_{I'}}$ have the same impact on the determinant as the reversed order of columns in $M^{I,I'}$.
To finish the proof it is therefore enough to combine Corollary~\ref{cor:tot} and Theorem~\ref{thm:varphi}, and the fact that a collection of disjoint paths as in the statement of the corollary defines a current $\n=(\omega,\eta)\in \mathfrak P_{S_I, T_{I'}}$, where $\omega =\eta$ is the union of these paths.
\end{proof}

\begin{remark} \label{rem:fermion}
\emph{Dirac} fermions, unlike \emph{Majorana} fermions, are particles which are different from their antiparticle.
One can imagine that the vertices on the outer boundary of $G$ represent fermions.
To each such fermion $v$, there correspond two operators of interest:  the creation operator $a_v$ and the annihilation operator $a_v^{\dagger}$.
For two operators $a_v,b_u$, define the anticommutator by $\{a_u,b_v\}=a_ub_v+b_va_u$.
We then have the fermionic anticommutation relations
\[
\{ a_u,a_v^{\dagger} \} =\delta_{u,v}, \quad \text{and} \quad \{ a_u,a_v\} =\{a_u^{\dagger},a_v^{\dagger}\}=0.
\]
In view of the results above, the correlations of $\varphi$ and $\tilde \varphi$ are a model for expectation values of Dirac fermions which are noninteracting in the case $U=0$.
Indeed the insertion of $\varphi_v$ or $\tilde \varphi_v$ in the correlator corresponds to the insertion of either the creation operator $a_v$ or the annihilation operator $a_v^{\dagger}$ into the expectation value, depending on whether these
insertions yield a source or a sink respectively (this depends on the preceding insertions). Moreover, these values are nonzero if and only if the number of creation and annihilation operators are equal,
and the above anticommutation relations follow from the fact that $\varphi^2_v+\tilde \varphi_v^2=1$ and $\varphi_v\tilde \varphi_v=0$ respectively.
\end{remark}

\begin{remark}\label{rem:wick}
The picture presented here for two i.i.d.\ Ising models should be compared to the one of two i.i.d.\ real Gaussian fields $\phi=(\phi_v)_{v\in V}$ and $\tilde \phi=(\tilde \phi_v)_{v\in V}$,
and the combined complex Gaussian field $\Phi=(\phi+i\tilde \phi)/\sqrt2$. In this case the moments of $\phi$ and  $\Phi$ are governed by the \emph{bosonic} Wick rules, and are given by \emph{hafnians} 
and \emph{permanents} of their two-point functions respectively. More precisely for two (multi-) sets $A,B$ of vertices of the same cardinality, we have 
\[
\mathbb E\Big[\prod_{v\in A}\phi_v\Big] = \sum_{\pi\in \Pi(A)} \prod_{uv\in \pi} \mathbb E[\phi_u\phi_v],
\ \mathbb E\Big[\prod_{v\in A}\Phi_v\prod_{v\in B}\overline{\Phi}_v\Big] = \hspace{-0.2cm}\sum_{\pi\in \Pi(A,B)} \prod_{uv\in \pi} \mathbb E[\phi_u\phi_v],
\]
where $\mathbb E$ is the expectation with respect to the Gaussian measure, and $\overline \Phi$ is the complex conjugate of $\Phi$.
Unlike for Ising models, these identities are valid for any finite graph $G$ and any choice of vertices $A$ and $B$.
\end{remark}

\begin{remark}
Aizenman et al.~\cite{ADTW} derived an asymptotic version of the Pfaffian formula for the critical Ising model on graphs embedded in the upper half-plane where planarity is broken by allowing edge crossings (for the exact statement we refer the reader to~\cite{ADTW}). These edge crossings 
make the arguments above not valid on the level of exact vanishing of correlations corresponding to unbalanced colorings. However, the fact that the
critical currents have a fractal structure on large scales, causes many connectivities, that are deterministically forced to happen in the planar case,
to happen with high probability for graphs with edge crossings. Since the graphical representations of critical Ashkin--Teller model are also expected to be fractal~\cite{IkRa,IkRa1},
a similar phenomenon as~\cite{ADTW} should hold true at criticality for $U\neq 0$. Without going into technical details, we expect that the ratio of two 
critical correlation functions $\langle \varphi_{A_1} \varphi^2_{A_2} \tilde \varphi_{B_1} \tilde \varphi^2_{B_2} \rangle/   \langle \varphi_{A_1} \varphi^2_{A_2}  \varphi_{B_1}  \varphi^2_{B_2}  \rangle$
should be close to zero whenever the coloring corresponding to the numerator is unbalanced, and the points $A\cup B$ on the boundary are pairwise far away from each other.
\end{remark}

\section{Order-disorder correlations} \label{sec:disorders}
In this section we still consider the planar setup and we would like to derive analogous linear relations for correlators evaluated not only on the boundary but also in the bulk.
If we restrict ourselves to spin correlations, then clearly the arguments from before are not valid as it is not anymore possible to force different clusters of the current to intersect (they can evade 
each other by going {around} the insertion points). The well known remedy in the noninteracting case is to consider \emph{disorder operators} of Kadanoff and Ceva~\cite{KC} which effectively change the topology of $G$ to 
that of a branched double cover.
These operators, as the spins (which are also called \emph{order operators}), will come in two types $\mu$ and $\tilde \mu$ and will be evaluated not at the vertices but rather at the faces of~$G$.

Let $C, D$ be two sets of faces and let $\gamma_u$, $u\in C\cup D$, be fixed simple dual
paths connecting the faces to the outer face (we could as well choose any other face where these paths jointly end).
Let $\Gamma_C\subseteq E$ be the set of edges dual to $\gamma_{u_1}\triangle\cdots\triangle \gamma_{u_k}$ where $C=\{u_1,\ldots,u_k \}$.
Similarly define $\Gamma_D$, and let $\epsilon_{e}=(-1)^{\mathbf 1\{ e\in \Gamma_C\}}$ and $\tilde \epsilon_{e}=(-1)^{\mathbf 1 \{e\in \Gamma_D \}}$.
We consider the Ashkin--Teller probability measure with \emph{disorder insertions} at $C$ and $D$, given by
\begin{align}\label{def:ATdis}
\IP_{C,D} (\sigma,\tilde \sigma)=\frac1{\mathcal Z^{C,D}}\prod_{e\in E} \exp\big(J_{e} (\epsilon_{e}\sigma_e+\tilde \epsilon_{e}\tilde \sigma_e)+U_{e} (\epsilon_{e}\tilde\epsilon_{e}\sigma_e\tilde \sigma_e+1)\big),
\end{align}
where we write $\sigma_{uv}=\sigma_u\sigma_v$.
We note that this measure depends not only on the chosen faces $C,D$ but also implicitly on the underlying paths.
We will study the joint order-disorder correlators defined by 
\[
\langle  \sigma_A \tilde \sigma_{B}\mu_C\tilde \mu_D\rangle = \langle  \sigma_A \tilde \sigma_{B}\rangle_{C,D} \frac{\mathcal Z^{C,D}}{\mathcal Z},
\] 
where $\langle \cdot \rangle_{C,D}$ is the expectation with respect to $\IP_{C,D}$, and $A,B$ are sets of vertices as in previous sections.
We note that by planar duality~\cite{Fan} (see also~\cite{PfiVel}), the disorder operators become order operators for the dual Ashkin--Teller model.

We want to give a random current representation of such mixed correlations. 
A naturally associated notion is that of a \emph{double cover of} $G$ \emph{branching around} $C\triangle D$~\cite{CheSmi,CI,CHI}. 
In general, a double cover of a graph $G$ is a graph $G'$ with a two-to-one local graph isomorphism from $ G'$ to $G$ mapping $v$ to $\underline v$. 
If $v$ is a vertex of $G'$, we denote by $v^\dagger$ the vertex satisfying $v\neq v^\dagger$ and $\underline{v}=\underline{v^\dagger}$, and we say that $v$ and $v^\dagger$ belong to different \emph{sheets} of $G'$.
We also say that $G'$ \emph{branches around} a face $u$ if the cycle composed of the edges surrounding $u$
lifts to a path connecting two vertices in different sheets of~$G'$. Otherwise, if such cycle lifts to a cycle in $G'$, then $G'$ does not branch around~$u$.
If a graph has $m$ faces, then there are $2^m$ double covers corresponding to the sets of faces around which the cover branches.
We denote by $G^{C\triangle D}= (V^{C\triangle D},E^{C\triangle D})$ the double cover branching around $C\triangle D$.
The main reason why we consider such double covers is that the spin configuration $\tau\in \{-1,+1\}^{V^{C\triangle D}}$ satisfying
\begin{align} \label{def:tau}
\tau_u\tau_v=\epsilon_{{uv}}\tilde \epsilon_{uv}\sigma_u \sigma_v \tilde \sigma_u\tilde \sigma_v
\end{align}
is well defined only on $G^{C\triangle D}$ but not on $G$ itself. 
Note that from this definition it follows that
\begin{align} \label{eq:negsheet}
\tau_{v}=-\tau_{v^\dagger}
\end{align}
for every $v\in V^{C\triangle D}$. In other words, $\tau$ has a multiplicative monodromy of $-1$ around every branch point.

To describe the influence of disorder correlators on the topology of currents themselves, we define $\frak F^*_{C\triangle D}$ to be the collection of sets $\omega \subseteq E$ 
such that every cycle contained in $\omega$ surrounds, i.e., disconnects from infinity, an even number of faces in $C\triangle D$. In particular $\frak F^*_{\emptyset}$ is the set of all subsets of~$E$. Also, for a current $\n=(\omega,\eta) \in\Omega_{A \triangle B}$ for which $\omega\in \frak F_B \cap \frak F^*_{C\triangle D}$, we define its \emph{sign with respect to} $A,B$ and $C,D$ by
\begin{align} \label{eq:sign}
\textnormal{sgn}(\n)=\textnormal{sgn}(A,B,C,D;\n)=(-1)^{|\eta\cap \Gamma_C|} (-1)^{|\rho_B \cap (\Gamma_C\triangle \Gamma_D)|},
\end{align}
where $\rho_B \subseteq \omega$ is \emph{any} collection of simple edge-disjoint paths that connect the vertices in~$B$ into pairs. If $B=\emptyset$, then we take $\rho_B=\emptyset$, and otherwise such paths exist since $\omega\in  \frak F_B $. Moreover, this sign is well defined. 
Indeed, if we assume that there exists another collection of paths
$\rho'_B$ yielding a different sign, then immediately
\[
(-1)^{|\rho_B \cap (\Gamma_C\triangle \Gamma_D)|+|\rho'_B \cap (\Gamma_C\triangle \Gamma_D)|}=(-1)^{|(\rho_B\triangle \rho'_B) \cap (\Gamma_C\triangle \Gamma_D)|}=-1,
\] 
which in turn means that $\rho\triangle \rho'$ contains a cycle that surrounds an odd number of faces from $C\triangle D$. 
Indeed, $\rho_B\triangle \rho'_B$ can be written as a union of disjoint cycles, and the equality above implies that one of these cycles must contain an odd number of edges from $\Gamma_C\triangle \Gamma_D$.
By properties of planar topology, we deduce that this cycle surrounds an odd number of faces from $C\triangle D$. This is a contradiction with the fact that $\omega\in \frak F^*_{C\triangle D}$.

We are now able to prove a planar generalization of Proposition~\ref{prop:switch} (which we recover in the absence of disorders, i.e., when $C=D=\emptyset$).
\begin{proposition} \label{prop:switchdis} For all coupling constants $J$ and $U$,
\begin{align*}
\langle  \sigma_A \tilde \sigma_{B}\mu_C\tilde \mu_D\rangle_{\at} = \frac1{Z_{\emptyset}} \sum_{\n=(\omega,\eta)\in\Omega_{A\triangle B} } \textnormal{sgn}(\n) w_{\at}(\n) \mathbf 1\{\omega\in \frak F_B \cap \frak F^*_{C\triangle D}\}.
\end{align*}
\end{proposition}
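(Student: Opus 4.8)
The plan is to mimic the expansion in the proof of Proposition~\ref{prop:switch}, but now carefully tracking the signs produced by the disorder insertions, with the key additional ingredient being the auxiliary spin $\tau$ on the double cover $G^{C\triangle D}$. First I would rewrite the order--disorder correlator in terms of the disordered measure $\IP_{C,D}$ via the defining relation $\langle \sigma_A\tilde\sigma_B\mu_C\tilde\mu_D\rangle=\langle\sigma_A\tilde\sigma_B\rangle_{C,D}\,\mathcal Z^{C,D}/\mathcal Z$, and then expand the factors of the disordered Boltzmann weight. Exactly as before, I would use the identity
\begin{align*}
\exp\big(J_e(\epsilon_e\sigma_e+\tilde\epsilon_e\tilde\sigma_e)+U_e(\epsilon_e\tilde\epsilon_e\sigma_e\tilde\sigma_e+1)\big)=1+\delta_{\tau_u\tau_v}(x_e\epsilon_e\sigma_u\sigma_v+y_e),
\end{align*}
where the crucial point is that the signs $\epsilon_e,\tilde\epsilon_e$ are absorbed into $\tau$ through the monodromy relation~\eqref{def:tau}: the product $\epsilon_e\tilde\epsilon_e\sigma_u\sigma_v\tilde\sigma_u\tilde\sigma_v$ equals $\tau_u\tau_v$ on the double cover, so the Kronecker delta $\delta_{\tau_u\tau_v}$ correctly enforces agreement of $\tau$ across each selected edge. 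This is why the double cover $G^{C\triangle D}$ must be introduced: it is the natural space on which $\tau$ is single-valued.

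Next I would carry out the two summations. Summing over $\tau\in\{-1,+1\}^{V^{C\triangle D}}$ for a fixed edge set $\omega$ forces $\tau$ to be constant on each connected component of $(V^{C\triangle D},\text{lift of }\omega)$; by the antisymmetry relation~\eqref{eq:negsheet} this sum vanishes unless every cycle of $\omega$ lifts to a cycle (rather than a path between sheets), which is exactly the condition $\omega\in\frak F^*_{C\triangle D}$, and it also reproduces the condition $\omega\in\frak F_B$ from the $\tau_B$ factor and the weight $2^{k(\omega)}$ as in the undisordered case. Summing over $\sigma$ then restricts $\eta$ to currents with sources $A\triangle B$, since $\sigma_{A\triangle B\triangle\delta(\eta)}$ must be sign-invariant under flipping any spin. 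The remaining task is to collect the $\epsilon_e$-factors into the stated sign. The factor $\prod_{e\in\eta}\epsilon_e=(-1)^{|\eta\cap\Gamma_C|}$ comes directly from the $x_e\epsilon_e$ terms surviving on the odd edges $\eta$, accounting for the first factor in~\eqref{eq:sign}.

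The main obstacle will be the second factor $(-1)^{|\rho_B\cap(\Gamma_C\triangle\Gamma_D)|}$: this arises not from the expansion itself but from reconciling the sum over $\sigma$ with the disorder paths when $B\neq\emptyset$. Concretely, after summing over $\sigma$ the surviving configurations still carry the disorder-dependent constraint coming from $\tilde\sigma$, and rewriting the correlator $\langle\sigma_A\tilde\sigma_B\rangle_{C,D}$ back in terms of $\tau$-insertions on the double cover produces a sign that records how the $\tilde\sigma_B$ insertions interact with both $\Gamma_C$ and $\Gamma_D$; choosing paths $\rho_B$ connecting the vertices of $B$ in pairs makes this sign explicit, and the well-definedness argument (already supplied in the text just before the proposition, using $\omega\in\frak F^*_{C\triangle D}$) guarantees independence of the choice of $\rho_B$. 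I would verify this factor by a careful bookkeeping of how each $\tilde\sigma_v$, $v\in B$, contributes a monodromy along its connecting path, relative to the disorder lines. Finally, taking $C=D=\emptyset$ (so that all $\epsilon_e=\tilde\epsilon_e=1$, $\frak F^*_{\emptyset}$ is everything, and $\text{sgn}(\n)\equiv1$) must recover Proposition~\ref{prop:switch}, which serves as a consistency check on the sign conventions, and evaluating at $A=B=\emptyset$, $C=D=\emptyset$ fixes the normalization $\mathcal Z=2^{|V|}Z_\emptyset$ exactly as before.
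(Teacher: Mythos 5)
Your proposal follows essentially the same route as the paper's proof: introduce the auxiliary spin $\tau$ on the double cover $G^{C\triangle D}$ subject to the monodromy relation~\eqref{eq:negsheet}, expand the disordered Boltzmann factors with Kronecker deltas $\delta_{\tau_u\tau_v}$, extract $\mathbf 1\{\omega\in\frak F^*_{C\triangle D}\}$ from cycles that lift to cross-sheet paths and $2^{k(\omega)}\mathbf 1\{\omega\in \frak F_B\}$ from the $\tau$-sum, and obtain the two factors of $\textnormal{sgn}(\n)$ in~\eqref{eq:sign} from $\prod_{e\in\eta}\epsilon_e=(-1)^{|\eta\cap\Gamma_C|}$ and from the rewriting $\tilde\sigma_B=\sigma_B\tau_{B'}(-1)^{|\rho_B\cap(\Gamma_C\triangle\Gamma_D)|}$ with $\rho_B\subseteq\omega$ (so that $\tau_{B'}=1$ on the surviving configurations), exactly as the paper does. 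The one deviation is in execution rather than substance: the paper expands over the doubled edge set $E^{C\triangle D}$ with square-root weights $a_e,b_e$ and recombines them via $a_e^2+2b_e+b_e^2=y_e$ and $2a_e(1+b_e)=x_e$, whereas you expand directly over $E$ with the full weights $x_e,y_e$, which is legitimate (and skips the recombination step) because $\delta_{\tau_u\tau_v}=\delta_{\tau_{u^\dagger}\tau_{v^\dagger}}$ is invariant under the deck transformation and hence descends to a well-defined function of the edges of $G$ -- a point worth stating explicitly if you write the argument out in full.
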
 

 \begin{proof}
Consider the double cover $G^{C \triangle D}$, and let $\tau$ be spins on $V^{C\triangle D}$ satisfying condition \eqref{def:tau}.
As in Proposition~\ref{prop:switch}, for every edge $e=uv\in E^{C\triangle D}$, we have
\begin{align} \label{eq:newweight}
 \exp\big(\tfrac 12J_{\underline{e}} (\epsilon_{{e}}\sigma_{{e}}+\tilde \epsilon_{{e}}\tilde \sigma_{{e}})+\tfrac 1 2 U_{\underline{e}} (\epsilon_{{e}}\tilde\epsilon_{{e}}\sigma_{{e}}\tilde \sigma_{{e}} +1)\big)
=1+ \delta_{\tau_u\tau_v}(a_{{e}}\epsilon_{{e}} \sigma_{{e}}+b_{{e}}  ),
\end{align}
where we write $\sigma_{e}=\sigma_{\underline u} \sigma_{\underline v}$, $\epsilon_e=\epsilon_{\underline e}$, and 
\[
a_{ e}=x_{\underline e}(\tfrac 12U_{ \underline e}, \tfrac 12J_{\underline e}),\ \textnormal{and} \ b_{ e}=y_{ \underline e}(\tfrac 12U_{\underline e}, \tfrac 12J_{\underline e}),
\] 
and where the weights $x$ and $y$ are defined in~\eqref{eq:CD}.

Let $\Sigma=\{-1,+1\}^{V}$ and let $\mathcal T$ be the set of spin configurations $\tau\in\{-1,+1\}^{V^{C\triangle D}}$ satisfying the monodromy condition~\eqref{eq:negsheet}.
Note that for a collection of paths $\rho\subseteq E$ connecting $B$ into pairs as in~\eqref{eq:sign}, we have by definition of $\tau$ that 
\[
\tilde \sigma_B = \sigma_B\tau_{B'} (-1)^{|\rho\cap \Gamma_{C\triangle D}|},
\]
where $B'\subseteq V^{C\triangle D}$ are the endpoints of lifts of the paths in $\rho$. 
We can now rewrite the product of the Gibbs--Boltzmann factors in \eqref{def:ATdis} over the edges in $E$ as a product of their square roots \eqref{eq:newweight} over twice as many edges in $E^{C\triangle D}$.
We therefore have
\begin{align*}
&\mathcal{Z} \langle \sigma_A \tilde \sigma_{B} \mu_C \tilde \mu_D \rangle_{\at} 
= \sum_{\sigma,\tilde \sigma\in \Sigma} \sigma_{ A } \tilde \sigma_B\prod_{e=uv\in  E^{C\triangle D}}\big(1+ \delta_{\tau_u\tau_v}\big(a_{{e}} \epsilon_{{e}}\sigma_{{e}}+b_{{e}}  ) \big)\\
&= \sum_{\sigma\in \Sigma, \tau\in \mathcal T} \sigma_{ A\triangle B}  \sum_{\omega\subseteq  E^{C\triangle D}} \tau_{B'} (-1)^{|\rho\cap\Gamma_{C \triangle D}|}\prod_{e=uv\in \omega} \delta_{\tau_u\tau_v} \delta_{\tau_{u^\dagger}\tau_{v^\dagger}}(a_{{e}}  \epsilon_{{e}}\sigma_{{e}}+b_{{e}}  )\\
&= \sum_{\sigma \in \Sigma}  \sum_{\omega\subseteq E^{C\triangle D}}\mathbf 1\{\underline \omega\in\mathfrak F\}2^{k(\underline \omega)} (-1)^{|\rho\cap \Gamma_{C\triangle D}|}\sigma_{ A\triangle B} \prod_{e=uv\in \omega} ( a_{ e}  \epsilon_{{e}} \sigma_{ e}+b_{ e}  ),
\end{align*}
where $\rho\subseteq \underline \omega$ are paths connecting $B$ into pairs, and where $\mathfrak F=\frak F_B \cap \frak F^*_{C\triangle D}$.
In the second line we used the fact that $ \delta_{\tau_u\tau_v}= \delta_{\tau_{u^\dagger}\tau_{v^\dagger}}\in \{0,1\}$. The indicator $\mathbf1\{\underline \omega\in  \frak F^*_{C\triangle D} \}$ 
in the last line is a consequence of 
property~\eqref{eq:negsheet}. 
Indeed if $\underline \omega$ contains a cycle surrounding an odd number of faces in $C\triangle D$, then such cycle lifts to a path in $G^{C\triangle D}$ connecting vertices on different sheets. 
Hence, by~\eqref{eq:negsheet} the product of $\delta_{\tau_u\tau_v}$ along this path is zero. The factor $2^{k(\omega)}\mathbf 1\{\underline \omega\in \frak F_B\}$ arises for the same reason as in the
case $C=D=\emptyset$.
We now expand the product in the last line and write
\begin{align*}
&\sum_{\sigma\in \Sigma}   \sum_{\omega\subseteq  E^{C\triangle D}}\mathbf 1\{\underline \omega\in\mathfrak F\}2^{k(\underline\omega)} (-1)^{|\rho\cap \Gamma_{C \triangle D}|}\sum_{\eta \subseteq \omega} \sigma_{ A\triangle B\triangle \delta(\eta')} 
\prod_{e\in  \eta} \epsilon_e a_{  e} \prod_{e\in \omega \setminus \eta}b_{ e}  \\
&={2^{|V|}}\mathop{\sum_{\omega\subseteq E^{C\triangle D},\ \eta \subseteq \omega}}_{\delta(\eta')=A\triangle B} \Big( (-1)^{|\rho\cap \Gamma_{C\triangle D}|}  \prod_{e\in \eta} \epsilon_{ e}\Big)
\mathbf 1\{\underline \omega\in\mathfrak F\}
2^{k(\omega)} \prod_{e\in  \eta} a_{ e} \prod_{e\in \omega \setminus \eta}b_{ e}  \\
&= {2^{|V|}}\sum_{\n\in \Omega^{C\triangle D}_{A \triangle B}} \textnormal{sgn}(\n)w_{\at}(\n) \mathbf 1 \{\underline \omega\in \mathfrak F\},
\end{align*}
where $\eta'=\triangle_{e\in \eta} \{\underline e\}$ is the set of edges of $G$ whose exactly one lift to $G^{C\triangle D}$ belongs to $\eta$.
To justify the last line, we note that for every current $\tilde \n=(\tilde \omega,\tilde \eta)$, we have
\begin{align*}
\mathop{\sum_{\omega\subseteq E^{C\triangle D},\ \eta \subseteq \omega}}_{\underline \omega=\tilde\omega,\ \eta'=\tilde\eta} \prod_{e\in  \eta}  a_{  e} \prod_{e\in \omega \setminus \eta}b_{ e} =\prod_{e\in  \tilde\eta}  x_{  e} \prod_{e\in \tilde \omega \setminus \tilde \eta}y_{ e}.
\end{align*}
This can be seen by considering the local configurations for each edge separately and the fact that the weights satisfy $a_e^2+2b_e+b_e^2=y_e$ and $2a_e(1+b_e)=x_e$.
Using that $\mathcal Z=2^{|V|}Z_{\emptyset}$ we finish the proof.
 \end{proof}
 
 We now proceed to study the special case when the disorder and order insertions are placed next to each other. To be more precise,
a \emph{corner} $c=vf$ of $G$ is a pair of a vertex $v$ and a neighbouring face $f$. The \emph{Kadanoff--Ceva fermions}~\cite{KC} are defined as formal insertions of 
\[
 \psi(c)=  \sigma(v)  \mu(f) \qquad \text{and} \qquad \tilde \psi(c)= \tilde \sigma(v) \tilde \mu(f)
\]
into the correlation functions. In analogy to the $\varphi$ and $\tilde \varphi$ variables, we also define 
\[
 \chi(c)= \frac{ \psi(c)+\tilde\psi(c)}2 \qquad \text{and} \qquad \tilde \chi(c)=\frac{\psi(c)-\tilde\psi(c)}2,
\] 
and their squares, where we formally take $\mu^2=\tilde \mu^2=1$.
In what follows we will consider correlators of the form
\[
\langle \chi_{A_1} \chi^2_{A_2} \tilde \chi_{B_1} \tilde \chi^2_{B_2}\rangle
\]
where $A_1, A_2, B_1, B_2$ are disjoint sets of corners.
We will assume henceforth that the dual paths connecting the faces of the corners to the outer face are mutually avoiding, i.e., do not share edges. 
We will write $A=A_1\cup A_2$, $B=B_1\cup B_2$, and
for a set of corners $T$, we define $V(T)$ and $F(T)$ to be the sets of vertices and faces of $T$ respectively. 

This definition is possibly far from intuitive but natural for the statement of the next two results.
\begin{definition}
We say that a current $\n=(\omega,\eta)\in \Omega_{V(A_1\cup B_1)}$ with $\omega\in  \frak F^*_{F(A_1\cup B_1)}$ is \emph{null} for the correlator $\langle \chi_{A_1} \chi^2_{A_2} \tilde \chi_{B_1} \tilde \chi^2_{B_2}\rangle$ if there exists a set of corners $T\in \pev( A\cup B)$ with $\omega\in \frak F_{V(T)}$, 
such that
\begin{align} \label{eq:involution}
 |T\cap B|+|\eta\cap \Gamma_{F(T)}|+|\rho_{V(T)}\cap \Gamma_{F(A\cup B)} | \quad \textnormal{is odd.}
\end{align}
As before, $\Gamma_{F(T)}$ are the edges crossed by fixed dual paths connecting $F(T)$ to the outer face, 
and $\rho_{V(T)}$ is any collection of simple edge-disjoint paths contained in $\omega$ that connects the vertices in $V(T)$ into pairs.
\end{definition}

\begin{center}
\begin{figure} \label{fig:KC}
\includegraphics[scale=0.9]{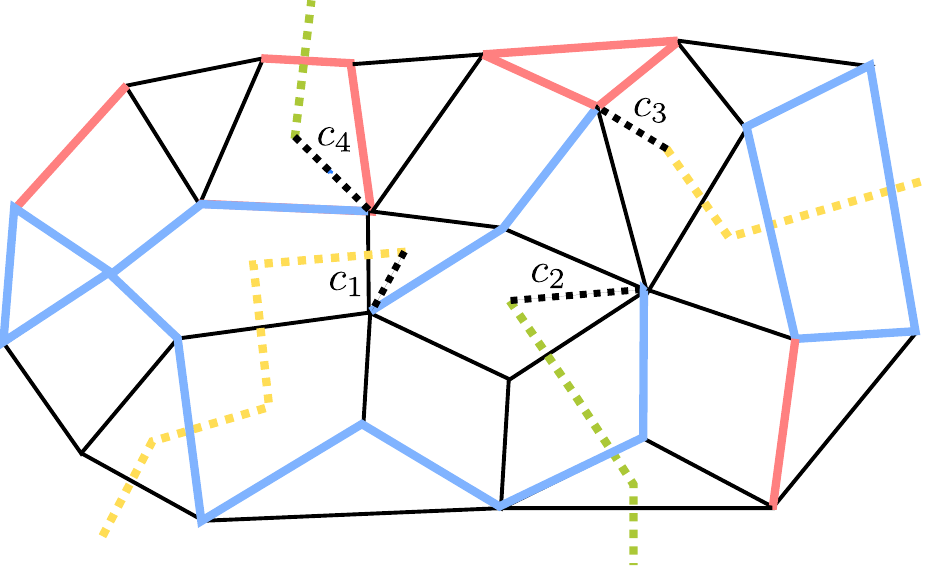}
\caption{
The blue edges represent $\eta$ and the red ones $\omega \setminus \eta$ in a current $\n=(\omega,\eta)$. 
The dual green edges cross $\Gamma_{\{c_2,c_4\}}$ and the yellow ones cross
$\Gamma_{\{c_1,c_3\}}$. 
The dotted black edges represent the corners $c_1,c_2,c_3,c_4$.
The current is null for the correlator $\langle \xi_{c_1}  \tilde\xi_{c_2} \xi_{c_3}\tilde \xi_{c_4}\rangle$ as $T=\{c_1,c_3\}$ satisfies~\eqref{eq:involution}.
In fact, as the coloring $B=\{c_1,c_3\}$ is unbalanced, all currents are null for this correlator by Lemma~\ref{lem:allnull}, and hence $\langle \xi_{c_1}  \tilde\xi_{c_2} \xi_{c_3}\tilde \xi_{c_4}\rangle=0$ by Proposition~\ref{prop:nullexpansion}. Note that $\n$ is not null e.g.\ for $\langle \tilde\xi_{c_1}  \tilde\xi_{c_2} \xi_{c_3} \xi_{c_4}\rangle$}
\end{figure}
\end{center}

\begin{proposition} \label{prop:nullexpansion}
With the above notation, for all coupling constants $J$ and~$U$,
\begin{align*}
\langle \chi_{A_1} \chi^2_{A_2} &\tilde \chi_{B_1} \tilde \chi^2_{B_2}\rangle 
 =\frac{1}{Z_{\emptyset}}\mathop{\sum_{\n=(\omega,\eta)\in \Omega_{V_1} }}_{\n \textnormal{ is not null}}\vspace{-0.4cm}w_{\at}(\n)2^{-k_{V(A\cup B)}(\omega)}(-1)^{|\eta\cap \Gamma_{F_1}|}\mathbf 1\{\omega\in  \frak F^*_{F_1}\},
\end{align*}
where $V_1=V(A_1\cup B_1)$ and $F_1=F(A_1\cup B_1)$.
\end{proposition}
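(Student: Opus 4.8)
The plan is to follow the proofs of the second switching lemma (Proposition~\ref{lem:RB}) and of Corollary~\ref{cor:pf}, but to feed the resulting terms into the \emph{disorder} switching lemma (Proposition~\ref{prop:switchdis}) in place of the plain one. First I would clear the squares: since $\mu^2=\tilde\mu^2=1$ we have $\psi(c)^2=\tilde\psi(c)^2=1$ and
\[
\chi(c)^2=\tfrac{1+\psi(c)\tilde\psi(c)}{2}=\psi(c)\chi(c),\qquad \tilde\chi(c)^2=\tfrac{1-\psi(c)\tilde\psi(c)}{2}=\psi(c)\tilde\chi(c),
\]
so that $\langle \chi_{A_1}\chi^2_{A_2}\tilde\chi_{B_1}\tilde\chi^2_{B_2}\rangle=\langle \psi_{A_2\cup B_2}\chi_A\tilde\chi_B\rangle$ with $A=A_1\cup A_2$, $B=B_1\cup B_2$. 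Expanding every $\chi(c)=(\psi(c)+\tilde\psi(c))/2$ and $\tilde\chi(c)=(\psi(c)-\tilde\psi(c))/2$, recording in $T\subseteq A\cup B$ the corners where the tilded operator is chosen, and collapsing the doubled corners via $\psi(c)^2=1$, the same vertex- and face-bookkeeping as in Corollary~\ref{cor:pf} gives
\[
\langle \chi_{A_1}\chi^2_{A_2}\tilde\chi_{B_1}\tilde\chi^2_{B_2}\rangle=2^{-|A\cup B|}\sum_{T\in\pev(A\cup B)}(-1)^{|T\cap B|}\big\langle \sigma_{V((A_1\cup B_1)\triangle T)}\tilde\sigma_{V(T)}\mu_{F((A_1\cup B_1)\triangle T)}\tilde\mu_{F(T)}\big\rangle,
\]
where terms with $|V(T)|$ odd are dropped since they vanish for parity reasons ($\frak F_{V(T)}=\emptyset$).

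Next I would apply Proposition~\ref{prop:switchdis} termwise. Because $V((A_1\cup B_1)\triangle T)\triangle V(T)=V(A_1\cup B_1)=V_1$ and $F((A_1\cup B_1)\triangle T)\triangle F(T)=F(A_1\cup B_1)=F_1$, every term becomes a sum over the \emph{same} current set $\n\in\Omega_{V_1}$ with $\omega\in\frak F_{V(T)}\cap\frak F^*_{F_1}$, weighted by $\textnormal{sgn}(\n)\,w_{\at}(\n)$. After interchanging the order of summation I would, for fixed $\n$, simplify the sign: using $\Gamma_{F((A_1\cup B_1)\triangle T)}=\Gamma_{F_1}\triangle\Gamma_{F(T)}$ and $\Gamma_{F((A_1\cup B_1)\triangle T)}\triangle\Gamma_{F(T)}=\Gamma_{F_1}$, the product $(-1)^{|T\cap B|}\textnormal{sgn}(\n)$ factors as $(-1)^{|\eta\cap\Gamma_{F_1}|}$ times $(-1)^{e(T)}$, where $e(T)$ is the parity appearing in the definition of a null current, cf.~\eqref{eq:involution}. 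The $T$-independent factor $(-1)^{|\eta\cap\Gamma_{F_1}|}$ is pulled out (it is exactly the sign in the statement), leaving the inner sum $\sum_{T\in\mathcal T}(-1)^{e(T)}$ over $\mathcal T=\{T\in\pev(A\cup B):\omega\in\frak F_{V(T)}\}$.

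It remains to evaluate this inner sum. If $\n$ is not null, then $e(T)$ is even for all $T\in\mathcal T$ by definition, so the sum equals $|\mathcal T|$; if $\n$ is null, I would fix a witness $T_0\in\mathcal T$ with $e(T_0)$ odd and check that $T\mapsto T\triangle T_0$ is a sign-reversing involution of $\mathcal T$, forcing the sum to vanish. Finally I would count $|\mathcal T|$ over $\mathbb F_2$: the condition $\omega\in\frak F_{V(T)}$ imposes one independent parity constraint per cluster of $\omega$ meeting $V(A\cup B)$ and automatically makes $T$ even, whence $|\mathcal T|=2^{|A\cup B|-k_{V(A\cup B)}(\omega)}$; together with the prefactor $2^{-|A\cup B|}$ this produces the factor $2^{-k_{V(A\cup B)}(\omega)}$ in the statement.

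The delicate point—and the step I expect to be the main obstacle—is this sign analysis: one must verify that $(-1)^{|T\cap B|}\textnormal{sgn}(\n)$ really factors as claimed with an exponent $e$ that is \emph{additive modulo $2$}, so that $T\mapsto T\triangle T_0$ genuinely reverses the sign and $\mathcal T$ is closed under $\triangle$. The terms $|T\cap B|$ and $|\eta\cap\Gamma_{F(T)}|$ are additive without effort, and closure of $\mathcal T$ follows since $\frak F_{V(\cdot)}$ is additive in $T$. The subtle contribution is $|\rho_{V(T)}\cap\Gamma_{F_1}|$, because $\rho_{V(T)}$ is only a \emph{choice} of pairing paths inside $\omega$; here I would argue exactly as in the well-definedness discussion preceding Proposition~\ref{prop:switchdis}, taking $\rho_{V(T\triangle T')}$ to be a simple-path representative of $\rho_{V(T)}\triangle\rho_{V(T')}$, whose cycle part meets $\Gamma_{F_1}$ in an even number of edges because $\omega\in\frak F^*_{F_1}$. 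The same hypothesis makes $e(T)$ independent of the path choice, so that nullity is a property of $\n$ alone; reconciling the edge set $\Gamma_{F_1}$ produced by the switching lemma with $\Gamma_{F(A\cup B)}$ in \eqref{eq:involution}, and handling corners that share a vertex or a face by reading $V(\cdot),F(\cdot)$ as $\mathbb F_2$-chains, are the remaining bookkeeping points.
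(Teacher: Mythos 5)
Your proposal is correct and follows essentially the same route as the paper's own proof: reduce the squares via $\chi^2=\psi\chi$, $\tilde\chi^2=\psi\tilde\chi$, expand into the signed sum $2^{-|A\cup B|}\sum_{S\in\pev(A\cup B)}(-1)^{|S\cap B|}\langle\psi_{(A_1\cup B_1)\triangle S}\tilde\psi_S\rangle$, apply Proposition~\ref{prop:switchdis} termwise over the common current set $\Omega_{V_1}$ with $\omega\in\frak F_{V(S)}\cap\frak F^*_{F_1}$, and evaluate the inner sum by the sign-reversing involution $S\mapsto S\triangle T$ in the null case and by constancy of the sign (evaluated at $S=\emptyset$, giving $(-1)^{|\eta\cap\Gamma_{F_1}|}$) plus the count $2^{|A\cup B|-k_{V(A\cup B)}(\omega)}$ otherwise. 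The subtleties you flag---additivity of the sign modulo the cycle part of $\rho_{V(S)}\triangle\rho_{V(T)}$, which is controlled exactly by $\omega\in\frak F^*_{F_1}$ as in the well-definedness discussion preceding Proposition~\ref{prop:switchdis}, and reading $\Gamma_{F_1}$ in place of $\Gamma_{F(A\cup B)}$ in~\eqref{eq:involution}---are precisely the points the paper's proof uses implicitly.
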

\begin{proof}
Note that $\chi_v^2=\psi_v \chi_v$ and $ \tilde \chi_v^2= \psi_v\tilde \chi_v$.
Therefore, as in the proof of Proposition~\ref{lem:RB}, we obtain
\begin{align*}
\langle \chi_{A_1} \chi^2_{A_2} \tilde \chi_{B_1} \tilde \chi^2_{B_2}\rangle & =2^{-|A\cup B|}\sum_{S \in\pev(A\cup B)} (-1)^{|S\cap B|} \langle  \psi_{(A_1\cup B_1)\triangle S} \tilde\psi_S \rangle.
\end{align*}
Using Proposition~\ref{prop:switchdis},
the right-hand side equals $1/Z_{\emptyset}$ times
\begin{align*}
2^{-|A\cup B|} \hspace{-0.4cm}\mathop{\sum_{\n=(\omega,\eta)\in \Omega_{V_1 } }}_{ \omega \in \frak{ F}^*_{F_1}}\hspace{-0.4cm}w_{\at}(\n)\Big( \sum_{S \in\pev (A\cup B)} (-1)^{|S\cap B|} \textnormal{sgn}(\n)\mathbf 1\{\omega\in \frak F_{V(S)}\}\Big),
\end{align*}
where the dependence of the sign on $S$ is
\[
\textnormal{sgn}(\n)=\textnormal{sgn}(V_1\triangle V(S),V(S),F_1 \triangle F(S),F(S);\n).
\]
It is therefore enough to prove that for every $\n=(\omega,\eta)$ with $\omega \in  \frak{ F}^*_{F_1}$, the second sum is equal to 
\[
2^{|A\cup B|-k_{V(A\cup B)}(\omega)}(-1)^{|\eta\cap \Gamma_{F_1}|}\mathbf 1\{\n\textnormal{ is not null}\} .
\]
To this end, we first assume that $\n$ is null and show that the sum is zero. The existence of $T$ as in \eqref{eq:involution} allows us to construct an involution of the collection of sets $S\in\pev (A\cup B)$ with $\omega\in \mathfrak F_{V(S)}$ that changes the sign $(-1)^{|S\cap B|} \textnormal{sgn}(\n)$. Indeed, consider the map $S\mapsto S\triangle T$.
Then the product of signs corresponding to $S$ and $S\triangle T$ is independent of $S$, and given by 
\[
(-1)^{ |T\cap  B|+|\eta\cap \Gamma_{F(T)}|+|\rho_{V(T)}\cap \Gamma_{F_1} |}=-1,
\] 
which yields the desired cancellation. On the other hand if $\n$ is not null, then $(-1)^{|S\cap B|} \textnormal{sgn}(\n)$ is constant, and for $S=\emptyset$ equal to 
\[
\textnormal{sgn}(V_1,\emptyset,F_1,\emptyset; \n) =(-1)^{|\eta\cap \Gamma_{F_1}|}.
\]
Moreover, $2^{|A\cup B|-k_{V(A\cup B)}(\omega)}$ is the cardinality of the set of all $S\in\pev (A\cup B)$ with $\omega\in \mathfrak F_{V(S)}$.
\end{proof}
\begin{remark}
Consider the situation when $F(A\cup B)$ contains only the outer face, and hence the vertices $V(A\cup B)$ lie on the outer boundary of the graph. Then one can choose the dual paths
so that $\Gamma_{F_1}=\emptyset$. 
In this case a current $(\omega,\eta)$ is null if and only if $A\con{\omega}B$, and the above result recovers Proposition~\ref{lem:RB} in this planar setup.
\end{remark}

To derive linear relations for correlations of Kadanoff--Ceva fermions, we first repeat verbatim the definitions from Section~\ref{sec:planar} for boundary spin correlations.
To be more precise, we call the corners in $A_2\cup B_2$ \emph{doubled}, and 
consider the set of \emph{(corner) nodes} $\mathcal K$, i.e., corners with multiplicities, where each doubled corner $c$ comes in two copies $c^<$ and $c^>$.
As before, we will have to introduce a natural order on $\mathcal K$.
To this end, note that removing the edges in $\Gamma_{F(A\cup B)}$, results in a graph $G'$ for which all the vertices
$V(A\cup B)$ lie on the outer face. We order $V(A\cup B)$ according to a counterclockwise order along this outer face. 
This induces an order on $\mathcal K$ where, for doubled nodes $c$, the copy $c^>$ comes immediately after $c^<$.
We refer to nodes coming from $B_1\cup B_2$ in this construction as blue, and define $\mathcal B$ the set of all blue nodes.
We call $\mathcal B$ a coloring of the nodes. The construction of source and sink nodes is exactly the same as in Section~\ref{sec:planar}.

As for spins on the boundary, we obtain linear relations for the correlations of Kadanoff--Ceva fermions.

\begin{lemma}\label{lem:allnull}
Let $\mathcal K$ be as above and let $\mathcal B \subseteq \mathcal K$ be an unbalanced coloring. Then all currents $\n=(\omega,\eta)\in \Omega_{V(A_1\cup B_1)}$ with $\omega\in \mathfrak F^*_{F(A_1\cup B_1)}$ are null for the corresponding correlator, and as a result 
\begin{align*}
\langle \chi_{A_1} \chi^2_{A_2} \tilde \chi_{B_1} \tilde \chi^2_{B_2}\rangle  =0
\end{align*}
for all coupling constants $J$ and $U$.
\end{lemma}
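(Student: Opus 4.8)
The plan is to reduce Lemma~\ref{lem:allnull} to the combinatorial structure already developed in Section~\ref{sec:planar}, mirroring the way Corollary~\ref{cor:pf} followed from Lemma~\ref{lem:bal}. By Proposition~\ref{prop:nullexpansion}, the correlator vanishes if every current $\n=(\omega,\eta)\in \Omega_{V(A_1\cup B_1)}$ with $\omega\in \frak F^*_{F(A_1\cup B_1)}$ is null, so it suffices to prove that an unbalanced coloring $\blue$ forces nullity for all such $\n$. First I would recall from Lemma~\ref{lem:bal} that an unbalanced coloring admits \emph{no} even planar partition compatible with it; this is the topological obstruction I want to exploit. The key is to show that nullity is \emph{guaranteed} exactly when no current can produce a compatible even planar partition of $\mathcal K$.

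Next I would analyze the partition structure that a given current $\n$ induces on the corner nodes $\mathcal K$. Since the graph $G'$ obtained by deleting the edges in $\Gamma_{F(A\cup B)}$ places all of $V(A\cup B)$ on its outer face, the connectivity pattern of $\omega$ among these vertices (tracked on the double cover, keeping sheets in mind) induces a planar partition $\pi(\n)$ of the nodes, just as in the boundary case. The crucial step is to produce, from the failure of balance, a witnessing set $T\in\pev(A\cup B)$ satisfying the parity condition~\eqref{eq:involution}. I would argue that if $\pi(\n)$ were compatible with $\blue$ it would have to be balanced by Lemma~\ref{lem:bal}, a contradiction; therefore some component of $\pi(\n)$ must contain nodes of both colors, i.e., there exist a red node and a blue node connected through $\omega$ (possibly after accounting for the sheet structure of the branched cover). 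Taking $T$ to be such a connected red--blue pair, I would verify that the combined parity in~\eqref{eq:involution} is odd, using the same sign-tracking that appears in the definition of $\textnormal{sgn}(\n)$ and in the proof of Proposition~\ref{prop:nullexpansion}.

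The main obstacle I anticipate is the bookkeeping of the three parity contributions in~\eqref{eq:involution} — the color term $|T\cap B|$, the disorder term $|\eta\cap\Gamma_{F(T)}|$, and the monodromy term $|\rho_{V(T)}\cap\Gamma_{F(A\cup B)}|$ — which in the boundary case of Corollary~\ref{cor:pf} collapsed to the clean condition $A\con{\omega}B$. Here the disorder insertions genuinely change the topology to that of the double cover $G^{C\triangle D}$, so the relevant notion of ``connected'' must be read on $G^{C\triangle D}$: a red and a blue node can be joined by a path in $\omega$ whose lift either stays on one sheet or switches sheets, and only a careful choice of $T$ (and of the auxiliary paths $\rho_{V(T)}$) makes the total parity odd. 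I would handle this by choosing $T$ to consist of a single red corner and a single blue corner lying in the same connected component of $\omega$ on the cover, and then showing that the sum of the three parities equals the monodromy of $\tau$ around the relevant branch structure, which is forced to be odd precisely because the coloring is unbalanced. Once this witnessing $T$ is exhibited for an arbitrary admissible $\n$, every current is null and the correlator vanishes by Proposition~\ref{prop:nullexpansion}.
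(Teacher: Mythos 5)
Your reduction to the combinatorics of Section~\ref{sec:planar} breaks down at its first step. In the disorder setting the connectivity pattern of $\omega$ does \emph{not} induce a planar partition of the corner nodes: the circular order on $\mathcal K$ is defined on the cut graph $G'$ obtained by deleting the edges of $\Gamma_{F(A\cup B)}$, but the current lives in $G$ and may freely use those edges, so two components of $\omega$ can cross in that order. Passing to the double cover does not rescue this, since a double cover of a disk branched over two or more faces is not a planar surface, and Lemma~\ref{lem:bal} has no analogue there. Consequently your claimed dichotomy --- unbalanced coloring forces some $\omega$-component to contain both colors --- is simply false here. For instance, with corners in cyclic order $c_1(\red),c_2(\blue),c_3(\red),c_4(\blue)$ the coloring is maximally unbalanced (all four nodes are sinks), yet a current can join $c_1$ to $c_3$ and $c_2$ to $c_4$ by paths in $\eta$ that avoid each other in $G$ by crossing a disorder line, keeping $\red \ncon{\omega} \blue$; this is essentially the configuration in Fig.~3. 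Nullity is a \emph{parity-of-signs} condition, not a connectivity condition: as the remark after Proposition~\ref{prop:nullexpansion} notes, it collapses to $A\con{\omega}B$ only when all disorder faces are the outer face and one may take $\Gamma_{F_1}=\emptyset$.

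The second half of your plan inherits the problem. Even when a red--blue pair connected in $\omega$ exists, taking it as the witness $T$ gives $|T\cap B|$ odd but no control whatsoever on the parity of $|\eta\cap\Gamma_{F(T)}|+|\rho_{V(T)}\cap\Gamma_{F(A\cup B)}|$, which depends on how the connecting path winds relative to the disorder lines; and your closing appeal to ``the monodromy of $\tau$ \ldots forced to be odd because the coloring is unbalanced'' conflates two unrelated things, since by \eqref{eq:negsheet} the monodromy of $\tau$ is $-1$ around \emph{every} branch face regardless of balance, so it cannot detect imbalance. The paper's proof chooses $T$ quite differently: it decomposes $\eta$ (not $\omega$) into mutually noncrossing cycles $\mathcal C$ and simple paths $\mathcal P$ by picking, at each vertex, a noncrossing pairing of the incident $\eta$-edges; it extends each path in $\mathcal P$ by the dual disorder paths defining $\Gamma_{F_1}$ to get boundary-to-boundary paths $\mathcal P'$; the source/sink imbalance then yields an extended path $\mathfrak p$ joining two sources or two sinks, whose endpoint corners may well be of the \emph{same} color. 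Taking $T$ to be these two corners and $\rho_{V(T)}$ the primal part of $\mathfrak p$, both $|T\cap B|$ and the disorder terms are compared with the crossing number of $\mathfrak p$ against the other extended paths (using that crossings are only primal-versus-dual, and that cycles in $\mathcal C$ cross $\mathfrak p$ evenly), which forces the total parity in \eqref{eq:involution} to be odd. This noncrossing decomposition and crossing-parity bookkeeping is the actual content of the lemma, and nothing in your outline produces it.
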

\begin{proof}
Write $F_1=F(A_1\cup B_1)$ and $V_1=V(A_1\cup B_1)$, and let $\n=(\omega,\eta)\in \Omega_{V_1}$ with $\omega\in \mathfrak F^*_{F_1}$. 
By Proposition~\ref{prop:nullexpansion}, it is enough to prove that $\n$ is null.
To this end, for each vertex $v$ of nonzero even degree in $\eta$, we arbitrarily choose one of the two noncrossing pairings of the edges incident on $v$ in which 
each edge is paired with its neighbour, i.e., an edge incident on the same face. For each vertex of odd degree, we proceed analogously after choosing an edge which will remain unpaired. The reflexive and transitive closure of the relation of being paired together defines a partition of $\eta$. This partition naturally splits into a collection of mutually noncrossing
cycles~$\mathcal C$ and simple
paths $\mathcal P$ with endpoints in $V_1$. We extend the \emph{primal} paths in $\mathcal P$ by attaching to them the corresponding 
\emph{dual} paths from the definition of $\Gamma_{F_1}$ (for each $\{v,f\}\in \mathcal K$, we connect the primal 
path with an endpoint at $v$ with the dual path starting at $f$), and we call $\mathcal P'$ the resulting collection of extended paths starting end ending at the outer face of~$G$.

Note that since the coloring of $\mathcal B \subseteq \mathcal K$ is unbalanced, the numbers of resulting sources and sinks are different, and hence there must exist an extended path $\mathfrak p \in \mathcal P'$ which connects two sources or two sinks with each other.
Let $c=\{v,f\}$ and $c'=\{v',f'\}$ be the corresponding corners that the path $\mathfrak p$ joins. We will prove that $T=\{c,c'\}$ satisfies \eqref{eq:involution} (and hence $\n$ is null) by showing that
\begin{align*}
 |\{c,c'\}\cap B| \quad \text{and} \quad |\eta\cap \Gamma_{\{f,f' \}}|+|\rho_{\{v,v'\}}\cap \Gamma_{F_1} | 
\end{align*}
are of different parity, where $\rho_{\{v,v'\}}$ is chosen to be the primal part of $\mathfrak p$.

We first observe that $|\{c,c' \}\cap B|$ has different parity than $\textnormal{x}(\mathfrak p)$ that we define to be the total number of crossings between the extended path $\mathfrak p$ and all 
the remaining extended paths in $ \mathcal P' \setminus \{ \mathfrak p\}$ (excluding possible self-crossings of~$\mathfrak p$). 
Indeed, $c$ and $c'$ are both sources or both sinks. Hence if they are of different color (and $|\{c,c'\}\cap B|$ is odd), then by definition of sources and sinks, there is an 
even number of nodes (in the fixed order on $\mathcal K$) between $c$ and~$c'$. Analogously, if $c$ and $c'$ are of the same color (and $|\{c,c'\}\cap B|$ is even), then there is an odd
number of nodes between $c$ and $c'$.
On the other hand, the extended paths start and end on the outer boundary of $G$, and therefore planar topology implies that
this number has the same parity as $\textnormal{x}(\mathfrak p)$.

To finish the proof it is therefore enough to show that the parity of $|\eta\cap \Gamma_{\{f,f' \}}|+|\rho_{\{v,v'\}}\cap \Gamma_{F_1} |$ 
is the same as that of the number of crossings $\textnormal{x}(\mathfrak p)$.
To this end, recall that each extended path is composed of a primal and two dual paths.
Also note that by construction the only way the extended paths can cross is when a primal part of one path crosses the dual part of another path.
This implies that $\textnormal{x}(\mathfrak p)$ has the same parity as
\begin{align} 
|(\bigcup \mathcal P\setminus\rho_{\{v,v'\}})\cap \Gamma_{\{f,f' \}}|& + | \rho_{\{v,v'\}}\cap \Gamma_{F_1\setminus \{f,f' \}}| \nonumber  \\
&=|(\bigcup \mathcal P)\cap \Gamma_{\{f,f' \}}| + | \rho_{\{v,v'\}}\cap \Gamma_{F_1}| . \label{eq:crossings}
\end{align}
We finally note that the contribution of $\bigcup \mathcal C\subseteq \eta$ to $|\eta\cap \Gamma_{\{f,f' \}}|$ is even since, again by construction, no cycle in $\mathcal C$ intersects the primal part $\rho_{\{v,v' \}}$ of $\mathfrak p$, and each such cycle crosses the full path $\mathfrak p$ an even number of times.  Hence $|\eta\cap \Gamma_{\{f,f' \}}|$ has the same parity as $|(\eta\setminus \bigcup \mathcal C)\cap \Gamma_{\{f,f' \}}|=|(\bigcup \mathcal P)\cap \Gamma_{\{f,f' \}}|$. This together with \eqref{eq:crossings} finishes the proof.

\end{proof}
This result together with arguments identical to those for boundary spin correlations directly yield the known Pfaffian formulas for the correlations of Kadanoff--Ceva fermions. We refer the reader to~\cite[Section 6]{ADTW} for a brief historical account of Pfaffian identities for order-disorder correlations (see also~\cite{CCK}), and we note that
the first proof that used random currents was obtained by Aizenman et al.\ in~\cite{ADTW}.

In what follows, we will consider matrices indexed by the nodes $\mathcal K$ with the prescribed order.
Let~$F$ be the $\mathcal K\times\mathcal K$ square antisymmetric matrix given 
by 
\[
F_{c,c'} =  \mathbf 1\{u\neq v \}(-1)^{\mathbf 1\{c'> c \}} \langle \psi_c\psi_{c'}\rangle, \qquad c,c'\in\mathcal K.
\]
Recall that the definition of these correlations assumes implicitly for each $c=\{v,f\}\in\mathcal K$ a choice of a dual path starting at $f$ that constitutes a line of disorder across 
which the sign of the coupling constant in~\eqref{def:ATdis} changes.
Again, when evaluating the correlations we project the nodes $\mathcal K$ onto the underlying corners. In particular, if a corner $c$ is doubled, then
$\psi_{c}=\psi_{c^<}=\psi_{c^>}$.
For every $S\in\pev(\mathcal K)$, we define ${F}^{S}$ to be the restriction of $F$ to the rows and columns indexed by $S$.
The following Pfaffian formula is derived from the lemma above and Proposition~\ref{lem:sousin}, as in the proof Corollary~\ref{cor:recPf}.
\begin{corollary}
Let $J$ be arbitrary and $U=0$. Then for all $S\in \pev(\mathcal K)$, 
\[
\langle \psi_S\rangle = \pf(F^S).
\]
\end{corollary}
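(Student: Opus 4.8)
The plan is to mirror exactly the inductive argument used to prove the boundary spin Pfaffian formula in Corollary~\ref{thm:pf}, replacing the role of Corollary~\ref{cor:pf} (vanishing for unbalanced colorings) by its order-disorder analogue Lemma~\ref{lem:allnull}. The engine driving both proofs is the same purely combinatorial identity on Pfaffians established in Proposition~\ref{lem:sousin}; the only new ingredient needed in this setting is the vanishing of $\langle \chi_{A_1}\chi^2_{A_2}\tilde\chi_{B_1}\tilde\chi^2_{B_2}\rangle$ for unbalanced colorings, which Lemma~\ref{lem:allnull} already supplies.

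First I would record the base case: for $|S|=2$, say $S=\{c,c'\}$, the identity $\langle \psi_{\{c,c'\}}\rangle=\pf(F^{\{c,c'\}})=\langle\psi_c\psi_{c'}\rangle$ is immediate from the definition of $F$. For the inductive step, I would assume $\langle\psi_S\rangle=\pf(F^S)$ for all $S\in\pev(\mathcal K)$ with $|S|<2k$, and fix $T\in\pev(\mathcal K)$ with $|T|=2k$. The key is to produce a recurrence relating $\langle\psi_T\rangle$ to correlators $\langle\psi_S\rangle$ with $|S|<2k$. This comes from Lemma~\ref{lem:allnull}: choosing an unbalanced coloring $\mathcal B$ of $T$ with $|\mathcal B|$ even (which always exists), the correlator $\langle\chi_{A_1}\chi^2_{A_2}\tilde\chi_{B_1}\tilde\chi^2_{B_2}\rangle$ associated to $\mathcal B$ vanishes. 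Expanding this vanishing correlator via Proposition~\ref{prop:nullexpansion} into the signed sum over $S\in\pev(T)$ of products $\langle\psi_S\rangle$ yields, after isolating the $S=\emptyset$ and $S=T$ terms, precisely a recurrence of the form
\begin{align*}
2\langle\psi_T\rangle = \sum_{S\in\pev(T)\setminus\{\emptyset,T\}} (-1)^{|S\cap\mathcal B|+1}\langle\psi_S\rangle\langle\psi_{T\setminus S}\rangle,
\end{align*}
exactly analogous to~\eqref{eq:pf}, whose right-hand side involves only sets of size strictly less than $2k$.

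I would then invoke Proposition~\ref{lem:sousin} (with $\mathcal K$ restricted to $T$ in place of $\nod$) together with the sign bookkeeping of Lemma~\ref{lem:signs}, which shows that the Pfaffians $\pf(F^S)$ satisfy the identical recurrence; this is the content of the order-disorder version of Corollary~\ref{cor:recPf}. Applying the inductive hypothesis to replace each $\langle\psi_S\rangle$ and $\langle\psi_{T\setminus S}\rangle$ by $\pf(F^S)$ and $\pf(F^{T\setminus S})$, the right-hand side of the spin recurrence equals $2\pf(F^T)$, forcing $\langle\psi_T\rangle=\pf(F^T)$ and closing the induction.

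I expect the main obstacle to be verifying that the expansion of the vanishing correlator through Proposition~\ref{prop:nullexpansion} genuinely produces a recurrence with the same sign structure as the purely combinatorial Pfaffian identity of Proposition~\ref{lem:sousin}. In the boundary spin case this matching of signs is handled cleanly by Lemma~\ref{lem:signs}; here one must additionally confirm that the disorder-dependent sign factors $(-1)^{|\eta\cap\Gamma_{F_1}|}$ and the connectivity constraints $\omega\in\frak F^*_{F_1}$ appearing in Proposition~\ref{prop:nullexpansion} do not disturb this correspondence. The crucial point is that Proposition~\ref{lem:sousin} is a statement about the matrix entries $\langle\psi_c\psi_{c'}\rangle$ alone, independent of their current-representation interpretation, so once the vanishing and the recurrence are established at the level of correlators the sign matching is inherited automatically from Lemma~\ref{lem:signs}. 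Thus the genuine work is entirely front-loaded into Lemma~\ref{lem:allnull}, and the induction itself is a routine transcription of the argument for Corollary~\ref{thm:pf}.
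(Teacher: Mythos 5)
Your proposal is correct and follows essentially the same route as the paper, which derives this corollary precisely from Lemma~\ref{lem:allnull} and Proposition~\ref{lem:sousin} by the induction used for Corollary~\ref{thm:pf}, with an unbalanced even coloring supplying the recurrence. One cosmetic note: the recurrence comes from the algebraic expansion $\langle \chi_{A_1}\chi^2_{A_2}\tilde\chi_{B_1}\tilde\chi^2_{B_2}\rangle = 2^{-|A\cup B|}\sum_{S}(-1)^{|S\cap B|}\langle\psi_{(A_1\cup B_1)\triangle S}\tilde\psi_S\rangle$ (the first display in the proof of Proposition~\ref{prop:nullexpansion}) together with the factorization $\langle\psi_X\tilde\psi_Y\rangle=\langle\psi_X\rangle\langle\psi_Y\rangle$ at $U=0$, rather than from the random-current expansion of Proposition~\ref{prop:nullexpansion} itself, but your closing paragraph already makes exactly this point.
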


We now turn our attention to determinants and state our last, and as far as we know, new result in the planar setup.
To this end, let again $\origin \cup \sink$ be a partition of $\mathcal K$ into sources and sinks with $|\origin|=|\sink|$ derived from the coloring $\mathcal B$.
We define the $\origin\times \sink$ square matrix $F^{\origin,\sink}$ by
\begin{align*}
F^{\origin,\sink}_{c,c'}= (-1)^{s(c,c')} \langle \psi_{c}\psi_{c'}\rangle  \qquad{c\in \origin, c'\in \sink},
\end{align*}
where $s(c,c')$ is the number of sources strictly between (the smaller and the larger corner) $c$ and $c'$ in the fixed order on $\mathcal K$.
The next theorem is a direct result of the corollary above and Proposition~\ref{lem:sousin}, as in the proof of Theorem~\ref{thm:varphi}.
\begin{theorem} \label{thm:psi}
Let $J$ be arbitrary and $U= 0$.Then
\begin{align*}
  \langle \chi_{A_1} \chi^2_{A_2} \tilde \chi_{B_1} \tilde \chi^2_{B_2} \rangle =\begin{cases}2^{-|\mathcal K|/2}\det (F^{\origin,\sink} )& \text{if } |\origin|=|\sink|, \\
0 &  \text{otherwise}.
\end{cases}
\end{align*}
\end{theorem}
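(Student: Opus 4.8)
The plan is to reduce Theorem~\ref{thm:psi} to the combinatorial identity of Proposition~\ref{lem:sousin} exactly as Theorem~\ref{thm:varphi} was reduced to it in the boundary-spin case, with the corner-node machinery ($\mathcal K$, $F^S$, $F^{\origin,\sink}$) playing the role of the vertex-node machinery ($\nod$, $K^S$, $K^{\origin,\sink}$). The case $|\origin|\neq|\sink|$ corresponds to an unbalanced coloring, and there the left-hand side vanishes by Lemma~\ref{lem:allnull}, matching the right-hand side; so I would dispose of this case first and assume $|\origin|=|\sink|$ for the remainder.

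First I would treat the case with no doubled corners, i.e.\ $A_2\cup B_2=\emptyset$, so that $\mathcal K$ is just the set of corners $A\cup B$ and $B=\mathcal B$. Expanding $\chi$ and $\tilde\chi$ as in the proof of Proposition~\ref{prop:nullexpansion} gives
\begin{align*}
\langle \chi_{A}\tilde\chi_{B}\rangle = 2^{-|\mathcal K|}\sum_{S\in\pev(\mathcal K)}(-1)^{|S\cap B|}\langle \psi_{\mathcal K\setminus S}\psi_S\rangle,
\end{align*}
where the two factors decouple when $U=0$ because the two Ising copies are independent, giving $\langle\psi_{\mathcal K\setminus S}\rangle\langle\psi_S\rangle$. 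Applying the Pfaffian formula $\langle\psi_S\rangle=\pf(F^S)$ from the corollary preceding the theorem, this becomes $2^{-|\mathcal K|}\sum_{S}(-1)^{|S\cap\mathcal B|}\pf(F^S)\pf(F^{\mathcal K\setminus S})$, which by Proposition~\ref{lem:sousin} equals $2^{-|\mathcal K|/2}\det(F^{\origin,\sink})$. The one point to check here is that Proposition~\ref{lem:sousin} is stated abstractly for \emph{any} antisymmetric matrix built from two-point functions on an ordered set of nodes together with its associated sources and sinks; since $F$, $F^S$, $F^{\origin,\sink}$ are defined verbatim by the same formulas as $K$, $K^S$, $K^{\origin,\sink}$ with $\langle\psi\psi\rangle$ in place of $\langle\sigma\sigma\rangle$, the proposition applies unchanged, and in particular the sign identity of Lemma~\ref{lem:signs} depends only on the order and parity data of $\mathcal K$, not on which correlator fills the matrix entries.

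To handle doubled corners I would mimic the limiting argument at the end of the proof of Theorem~\ref{thm:varphi}: for each doubled corner $c=\{v,f\}$ append two auxiliary vertices $v^<,v^>$ with couplings $J_{vv^<}=J_{vv^>}=m$, replace $\chi_c^2$ by $\chi_{c^<}\chi_{c^>}$ (and likewise $\tilde\chi_c^2$), apply the already-proved doubled-free case, and let $m\to\infty$, in which limit $\chi_{c^<}=\chi_{c^>}=\chi_c$ almost surely and the entries $\langle\psi_{c^<}\psi_{c^>}\rangle\to 1$ reproduce the $1$'s on the relevant off-diagonal positions. The main obstacle I anticipate is confirming that this doubling construction is compatible with the disorder/double-cover setup: one must check that attaching the auxiliary vertices, which carry no faces and hence no disorder lines, leaves the monodromy structure and the ordering of $\mathcal K$ (induced after removing $\Gamma_{F(A\cup B)}$) intact, so that the auxiliary corners inherit the correct source/sink labels and the $m\to\infty$ limit genuinely identifies the two copies. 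Once that compatibility is verified, the rest is the same bookkeeping as in Theorem~\ref{thm:varphi}.
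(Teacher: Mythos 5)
Your proposal follows essentially the same route the paper takes: the paper proves Theorem~\ref{thm:psi} exactly by combining the Pfaffian corollary $\langle\psi_S\rangle=\pf(F^S)$ with Proposition~\ref{lem:sousin} ``as in the proof of Theorem~\ref{thm:varphi}'', which is precisely your three-step reduction (unbalanced colorings via Lemma~\ref{lem:allnull}, factorization at $U=0$ plus the purely combinatorial Proposition~\ref{lem:sousin} in the undoubled case, and the $m\to\infty$ doubling limit with auxiliary vertices), and your observation that Lemma~\ref{lem:signs} and Proposition~\ref{lem:sousin} depend only on the order and parity data of the node set, not on which correlator fills the matrix, is the correct justification for transferring them to $F$. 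The only blemish is a typo: the mixed correlator in your expansion should read $\langle\psi_{\mathcal K\setminus S}\tilde\psi_S\rangle$ before factorizing into $\langle\psi_{\mathcal K\setminus S}\rangle\langle\psi_S\rangle$.
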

Note that here, unlike for boundary spin correlations, we cannot conclude that the determinant above is nonnegative since some terms appearing in its random current expansion from Proposition~\ref{prop:nullexpansion} carry a negative sign.
We finish by remarking that the analogy from Remark~\ref{rem:fermion} with Dirac fermions applies also, for the same reasons, to the correlations of the $\chi$'s.

\section{Simon and Gaussian inequalities}
\label{sec:simon}
In the final section we discuss Ashkin-Teller spin correlations on arbitrary graphs.
We assume that $G=(V,E)$ is a finite, not necessarily planar graph, and
\begin{align} \label{eq:cond}
J_e\geq 0, \quad U_e\leq 0, \textnormal{ and } \cosh(2J_e)\geq e^{-2U_e},
\end{align}
for all $e\in E$.
Under these conditions Pfister and Velenik established in~\cite{PfiVel} that the spins $\sigma$ and $\tilde \sigma$ are negatively correlated. In particular, for $u,v,w\in V$, we have
\begin{align}\label{eq:PV}
\langle \sigma_u\sigma_w\rangle \langle \sigma_w\sigma_v\rangle \geq  \langle \sigma_u\sigma_w\tilde \sigma_w \tilde\sigma_v\rangle.
\end{align}
We first show a generalization of the original inequality of Simon~\cite{Simon} to the Ashkin--Teller model. 
We note that it is not clear to us if the improved inequality
due to Lieb~\cite{Lieb} is valid also in the interacting case $U<0$.
\begin{proposition}[Simon inequality] \label{thm:Simon} Let the coupling constants be as in~\eqref{eq:cond}, and let $u,v\in V$ be distinct vertices.
Let $W\subset V$ separate $u$ from $v$, i.e. $W$ is such that every path from $u$ to $v$ intersects $W$. Then
\begin{align*}
\langle \sigma_u\sigma_v\rangle \leq \sum_{w\in W} \langle \sigma_u\sigma_w\rangle \langle \sigma_w\sigma_v\rangle.
\end{align*}
\end{proposition}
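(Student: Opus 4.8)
The plan is to route the proof through the mixed Ashkin--Teller correlation $\langle \sigma_u\sigma_w\tilde\sigma_w\tilde\sigma_v\rangle$ and to exploit that, by the Pfister--Velenik bound~\eqref{eq:PV}, each such term is dominated by the product $\langle\sigma_u\sigma_w\rangle\langle\sigma_w\sigma_v\rangle$ appearing on the right-hand side. Concretely, I would first reduce to the intermediate inequality
\[
\langle\sigma_u\sigma_v\rangle \leq \sum_{w\in W} \langle\sigma_u\sigma_w\tilde\sigma_w\tilde\sigma_v\rangle,
\]
and then apply~\eqref{eq:PV} termwise to replace $\langle\sigma_u\sigma_w\tilde\sigma_w\tilde\sigma_v\rangle$ by $\langle\sigma_u\sigma_w\rangle\langle\sigma_w\sigma_v\rangle$, which immediately gives the claim. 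We may assume $u,v\notin W$, since if, say, $v\in W$, then the term $w=v$ on the right-hand side already equals $\langle\sigma_u\sigma_v\rangle$ and the inequality is trivial.

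To establish the intermediate inequality I would invoke the switching lemma (Proposition~\ref{prop:switch}). Taking $B=\emptyset$ gives $\langle\sigma_u\sigma_v\rangle = Z_{\emptyset}^{-1}\sum_{\n=(\omega,\eta)\in\Omega_{\{u,v\}}} w_{\at}(\n)$, while taking $A=\{u,w\}$ and $B=\{w,v\}$, so that $A\triangle B=\{u,v\}$, gives
\[
\langle\sigma_u\sigma_w\tilde\sigma_w\tilde\sigma_v\rangle = \frac{1}{Z_{\emptyset}}\sum_{\n=(\omega,\eta)\in\Omega_{\{u,v\}}} w_{\at}(\n)\,\mathbf 1\{\omega\in\mathfrak F_{\{w,v\}}\}.
\]
Thus both sides are sums over the \emph{same} index set $\Omega_{\{u,v\}}$ and differ only through the indicator $\mathbf 1\{\omega\in\mathfrak F_{\{w,v\}}\}$. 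The key elementary observation is that, for $w\neq v$, one has $\omega\in\mathfrak F_{\{w,v\}}$ if and only if $w\con{\omega}v$: the defining parity condition forces $w$ and $v$ to lie in a common connected component of $(V,\omega)$.

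The remaining step is topological. For every $\n\in\Omega_{\{u,v\}}$ the set $\eta$ has odd degree exactly at $u$ and $v$, hence contains a path from $u$ to $v$ and in particular $u\con{\omega}v$. Since $W$ separates $u$ from $v$, this common component must contain at least one vertex $w\in W$, and therefore $\sum_{w\in W}\mathbf 1\{\omega\in\mathfrak F_{\{w,v\}}\}\geq 1$. Because the hypotheses~\eqref{eq:cond} imply~\eqref{eq:nonnegative}, all weights $w_{\at}(\n)$ are nonnegative; summing the displayed identity over $w\in W$ and interchanging the two sums then yields $\sum_{w\in W}\langle\sigma_u\sigma_w\tilde\sigma_w\tilde\sigma_v\rangle \geq \langle\sigma_u\sigma_v\rangle$, as desired.

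I expect the main obstacle to be not the current-and-connectivity bookkeeping, which is essentially forced once one chooses the substitution $A=\{u,w\}$, $B=\{w,v\}$, but rather the passage from the interacting four-point function $\langle\sigma_u\sigma_w\tilde\sigma_w\tilde\sigma_v\rangle$ to the product of two-point functions. This is precisely where the sign of the four-body coupling enters: the inequality~\eqref{eq:PV} is available only under $U\leq 0$ (together with $J\geq 0$ and $\cosh(2J_e)\geq e^{-2U_e}$), and it is the only place in the argument that genuinely uses $U\leq 0$ rather than merely the nonnegativity of the current weights. Note that when $U=0$ the chain collapses and one recovers the classical Simon inequality for a single Ising model.
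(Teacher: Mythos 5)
Your proof is correct and takes essentially the same route as the paper's: both reduce the claim, via the switching lemma of Proposition~\ref{prop:switch} with $A=\{u,w\}$, $B=\{w,v\}$, to the intermediate bound $\langle\sigma_u\sigma_v\rangle\leq\sum_{w\in W}\langle\sigma_u\sigma_w\tilde\sigma_w\tilde\sigma_v\rangle$ (every current in $\Omega_{uv}$ connects $u$ to $v$ and hence its cluster meets the separating set $W$), and then apply the Pfister--Velenik inequality~\eqref{eq:PV} termwise. The only differences are cosmetic: you make explicit that $\omega\in\mathfrak F_{\{w,v\}}$ is equivalent to $w\con{\omega}v$ and treat $w\in\{u,v\}$ separately, whereas the paper writes the same argument as a chain of ratios divided by $\langle\sigma_u\sigma_v\rangle$.
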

\begin{proof}
Using \eqref{eq:PV} together with the switching lemma for $\sigma$ and $\tilde \sigma$, we can write
\begin{align*}
\sum_{w\in W} \frac{ \langle \sigma_u\sigma_w\rangle \langle \sigma_w\sigma_v\rangle}{ \langle \sigma_u\sigma_v\rangle} &\geq \sum_{w\in W} 
\frac{ \langle \sigma_u\sigma_w\tilde \sigma_w \tilde\sigma_v\rangle}{ \langle \sigma_u\sigma_v\rangle} \\
&= \frac{\sum_{\n \in \Omega_{uv}}  w_{\at}(\n)\big(\sum_{w\in W}\mathbf 1\{  u\con{\omega} w\}\big)}{\sum_{\n \in \Omega_{uv}}  w_{\at}(\n)} \\
&\geq 1.
\end{align*}
The second inequality holds true since $W$ separates $u$ from $v$, and since each current $\n=(\omega,\eta)\in \Omega_{uv}$ connects $u$ to $v$. This implies that 
\[
\sum_{w\in W}\mathbf 1\{  u\con{\omega} w\}\geq 1. \qedhere
\]
\end{proof}
We note that the above proof is the same as for the Ising model (see e.g.~\cite{RC}) once we have the inequality~\eqref{eq:PV}, and as the proofs in previous sections
it has a topological flavour. 
A standard consequence of the Simon inequality is the following sharpness statement (see e.g.~\cite[Corollary 9.38]{RC}).
\begin{corollary}
Let $\Gamma=(V,E)$ be an infinite, vertex-transitive graph, and let $J$ and $U$ be constant and as in~\eqref{eq:cond}. Denote by $\langle \cdot \rangle_{\Gamma}$ the expectation with respect to 
an infinite volume limit of the Ashkin--Teller model. 
Then if the susceptibility is finite, i.e.,
\[
\sum_{v\in V }\langle \sigma_o\sigma_v\rangle_{\Gamma}<\infty
\] for some vertex $o\in V$, then the correlations decay exponentially, i.e., 
there exists $C>0$ such that for all $v\in V$,
\[
\langle \sigma_o\sigma_v\rangle_{\Gamma} \leq e^{-Cd(o,v)},
\]
where $d$ is the graph distance.
\end{corollary}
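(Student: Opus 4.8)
The plan is to combine the Simon inequality (Proposition~\ref{thm:Simon}) with vertex-transitivity in the now-standard finite-box iteration scheme. First I would record that the inequality survives the infinite-volume limit: for fixed $u,v$ and a fixed finite separating set $W$, any finite exhaustion $\Lambda_n\nearrow\Gamma$ eventually contains $\{u,v\}\cup W$, and since a path in $\Lambda_n$ is a path in $\Gamma$ the set $W$ still separates $u$ from $v$ inside $\Lambda_n$. Applying Proposition~\ref{thm:Simon} on $\Lambda_n$ and letting $n\to\infty$, where the correlations converge to $\langle\cdot\rangle_\Gamma$, then yields the Simon inequality for $\langle\cdot\rangle_\Gamma$ itself.

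The engine is a single good box. Writing $B(o,r)$ for the graph ball and $\partial S=\{w\notin S:w\sim w'\text{ for some }w'\in S\}$ for the outer vertex boundary, note that $\partial S\subseteq V\setminus S$ and that, because $\Gamma$ is vertex-transitive and hence of bounded degree, $\partial S$ is finite whenever $S$ is. Since the susceptibility is finite, $\sum_{w\notin B(o,r)}\langle\sigma_o\sigma_w\rangle_\Gamma\to0$ as $r\to\infty$, so I can fix $N$ with
\[
\theta:=\sum_{w\in\partial B(o,N)}\langle\sigma_o\sigma_w\rangle_\Gamma\le\sum_{w\notin B(o,N)}\langle\sigma_o\sigma_w\rangle_\Gamma<1.
\]
Setting $S=B(o,N)$ and $L=N+1$ gives $S\cup\partial S\subseteq B(o,L)$, and for any $v\notin S$ the finite set $\partial S$ separates $o$ from $v$, so Simon yields $\langle\sigma_o\sigma_v\rangle_\Gamma\le\sum_{w\in\partial S}\langle\sigma_o\sigma_w\rangle_\Gamma\langle\sigma_w\sigma_v\rangle_\Gamma$.

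Then I would iterate, using vertex-transitivity to relocate the box to each boundary vertex. For $w\in\partial S$ pick an automorphism $g$ with $g(o)=w$; then $g(\partial S)$ separates $w$ from $v$ whenever $v\notin g(S)$, and by translation invariance $\sum_{w'\in g(\partial S)}\langle\sigma_w\sigma_{w'}\rangle_\Gamma=\theta$. Applying Simon repeatedly and bounding the final two-point function by $1$ gives $\langle\sigma_o\sigma_v\rangle_\Gamma\le\theta^{k}$ after $k$ steps, the only bookkeeping being that each step advances the base point by at most $L$ in graph distance, so $d(o,v)>(k+1)L$ suffices to run $k$ iterations. Hence $\langle\sigma_o\sigma_v\rangle_\Gamma\le\theta^{\lfloor d(o,v)/L\rfloor-1}$, which is exponential decay with rate $-\tfrac1L\log\theta>0$; the multiplicative prefactor $\theta^{-1}$ and the finitely many short distances are absorbed into the constant in the standard way, producing the stated bound $\langle\sigma_o\sigma_v\rangle_\Gamma\le e^{-Cd(o,v)}$.

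I expect the main obstacle to be organizational rather than conceptual: making the iteration rigorous by tracking that, after moving the box to a boundary vertex through an automorphism, the target $v$ genuinely remains outside the translated box so that the translated boundary keeps separating, and checking that the per-step geometric progress is bounded by $L$. The two genuinely model-dependent inputs, namely the Simon inequality and the correlation inequality~\eqref{eq:PV} of Pfister and Velenik behind it, are already in hand, so once they are combined the remainder is the classical deduction.
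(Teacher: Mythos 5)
Your proposal is correct and is precisely the standard Simon-inequality iteration that the paper itself invokes without proof by citing \cite[Corollary 9.38]{RC}: pass the finite-volume Simon inequality (Proposition~\ref{thm:Simon}) to the infinite-volume limit, pick a ball whose boundary two-point sum $\theta$ is below $1$ using finite susceptibility, and iterate via automorphisms, losing at most $L=N+1$ in distance per step. The only point worth flagging is that your step $\sum_{w'\in g(\partial S)}\langle\sigma_w\sigma_{w'}\rangle_\Gamma=\theta$ silently assumes the infinite-volume state is automorphism-invariant (and that two-point functions are nonnegative, which holds under~\eqref{eq:cond}), both of which are the same standing conventions as in the cited reference, so your argument matches the paper's intended proof.
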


The following inequality in the case of the Ising model was first established by Newman~\cite{newman}.
The name Gaussian comes from the fact that the inequality is saturated for Gaussian systems exactly as in the bosonic Wick's rule mentioned in Remark~\ref{rem:wick}.
\begin{proposition}[Gaussian inequality]
Let the coupling constants be as in~\eqref{eq:cond} and let $S\in \pev (V)$. Then
\[
\langle \sigma_S\rangle \leq \sum_{\pi \in \Pi(S)} \prod_{uv\in \pi} \langle \sigma_u \sigma_v\rangle.
\]
\end{proposition}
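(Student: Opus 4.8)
The plan is to prove the Gaussian inequality by induction on $|S|$, using the random current representation and the switching lemma for $\sigma$ and $\tilde\sigma$ (Proposition~\ref{prop:switch}) together with the negative correlation inequality~\eqref{eq:PV} of Pfister and Velenik. The base case $|S|=2$ is trivial, so I would fix $S$ with $|S|\geq 4$, single out one vertex $u\in S$, and seek to bound $\langle\sigma_S\rangle$ by splitting according to which vertex $v\in S\setminus\{u\}$ the ``current emanating from $u$'' first reaches.

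First I would consider the double random current measure on $\Omega_{S}\times\Omega_\emptyset$ (or a single current in $\Omega_S$), and use the switching lemma to relate $\langle\sigma_S\rangle\langle\sigma_T\rangle$-type products to sums over currents with a prescribed connectivity constraint, exactly as was done for the Simon inequality. The key geometric fact is that in any current $\n=(\omega,\eta)\in\Omega_S$, the vertex $u$ has odd degree in $\eta$ and is therefore connected in $\omega$ to at least one other source; pairing $u$ with the source $v$ to which it connects partitions the currents and furnishes the first factor $\langle\sigma_u\sigma_v\rangle$. The remaining sources $S\setminus\{u,v\}$ should then be handled by the inductive hypothesis applied to a correlation of size $|S|-2$.

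The main obstacle, and the step I expect to require the most care, is controlling the interaction between the factor $\langle\sigma_u\sigma_v\rangle$ extracted from the connectivity of $u$ and the inductive bound on $\langle\sigma_{S\setminus\{u,v\}}\rangle$: one must arrange that the current weights factorize (or are dominated) correctly so that summing over the choice of $v$ yields precisely $\sum_{v}\langle\sigma_u\sigma_v\rangle\,(\text{pairings of }S\setminus\{u,v\})$, which reassembles into $\sum_{\pi\in\Pi(S)}\prod_{uv\in\pi}\langle\sigma_u\sigma_v\rangle$. This is where the inequality~\eqref{eq:PV} enters: it plays the role that Griffiths' second inequality plays in the Ising case, allowing one to replace a correlation involving both $\sigma$ and $\tilde\sigma$ (arising from switching) by a product of two-point functions of $\sigma$ alone, at the cost of an inequality in the correct direction. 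The nonnegativity of the current weights under the hypotheses~\eqref{eq:cond} guarantees that all the sums are over nonnegative terms, so that dropping or rearranging connectivity constraints preserves the direction of the inequality.

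Concretely, the induction would proceed by writing, for a fixed source $u$,
\[
\langle\sigma_S\rangle = \sum_{v\in S\setminus\{u\}} \langle\sigma_S\rangle\,\IP_{\Omega_S}\!\left[u \text{ pairs with } v\right],
\]
where the pairing event partitions $\Omega_S$, and then bounding each term by $\langle\sigma_u\sigma_v\rangle\,\langle\sigma_{S\setminus\{u,v\}}\rangle$ via switching and~\eqref{eq:PV}. Applying the inductive hypothesis to the second factor and summing over $v$ closes the induction. I would take care to verify that the multigraph/connectivity bookkeeping does not overcount pairings and that every appeal to~\eqref{eq:PV} respects the requirement $U_e\leq 0$; this sign condition is precisely what makes $\sigma$ and $\tilde\sigma$ negatively correlated and hence what drives the whole argument.
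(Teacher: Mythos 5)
Your proposal is correct and follows essentially the same route as the paper: induction on $|S|$ with base case $|S|=2$, the observation that any current $\n=(\omega,\eta)\in\Omega_S$ connects the distinguished source to at least one other source, the switching lemma to turn that connectivity event into a mixed correlation $\langle\tilde\sigma_u\tilde\sigma_v\sigma_{S\setminus uv}\rangle$, and the Pfister--Velenik inequality~\eqref{eq:PV} to decouple it into $\langle\sigma_u\sigma_v\rangle\langle\sigma_{S\setminus uv}\rangle$ before applying the inductive hypothesis. The only cosmetic difference is that the paper avoids your exact-partition step (and the attendant worries about weight factorization) by using the overlapping-events bound $1\leq\sum_{u\in S,\,u\neq v}\mathbf 1\{\n\in\mathfrak F_{uv}\}$, which suffices since all current weights are nonnegative under~\eqref{eq:cond}.
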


\begin{proof}
The inequality clearly holds true if $|S|=2$. We proceed by induction and assume that $|S|>2$.
We fix $v\in S$ and note that for every $\n\in \Omega_S$, we have
\[
1 \leq \mathop{\sum_{u \in S}}_{u\neq v} \mathbf 1\{ \n \in \mathfrak F_{uv} \}.
\]
Using the switching lemma for $\sigma$ and $\tilde \sigma$, and~\eqref{eq:PV} we get 
\[
\langle \sigma_S\rangle \leq  \mathop{\sum_{u \in S}}_{u\neq v}  \langle\tilde \sigma_u\tilde  \sigma_v \sigma_{S\setminus uv}\rangle\leq \mathop{\sum_{u \in S}}_{u\neq v} 
 \langle \sigma_u  \sigma_v\rangle \langle \sigma_{S\setminus uv}\rangle,
\]
and the desired inequality follows by applying the induction hypothesis to the sets $S\setminus uv$.
\end{proof}

We note that a special case when $|S|=4$ is the Lebowitz inequality and was first established for the Ashkin--Teller model in~\cite{ChSh} for the same range of coupling constants and using arguments
that are conceptually similar to ours.

\bibliographystyle{amsplain}
\bibliography{TotalPositivity1}
\end{document}